\documentclass[aps,prx,reprint,superscriptaddress,amsmath,amssymb]{revtex4-2}

\usepackage{graphicx}
\usepackage{dcolumn}
\usepackage{bm}
\usepackage{latexsym}
\usepackage{epsfig}
\usepackage{amsbsy}
\usepackage{array}
\usepackage{tabularray}
\usepackage{setspace}
\usepackage{mathtools}
\usepackage{physics}
\usepackage{comment}
\usepackage{multirow}
\usepackage{natbib}

\usepackage{amsmath,amssymb,amsthm,yhmath,amsfonts,mathrsfs,enumitem,mathtools,dsfont}
\usepackage[utf8]{inputenc}
\usepackage{tikz}
\usetikzlibrary{shapes,arrows,positioning,automata}

\usepackage{graphicx}
\usepackage{tikz}

\newtheorem{theorem}{Theorem}
\newtheorem{lemma}{Lemma}
\newtheorem{remark}{Remark}

\newlength{\tblwidth}
\setlength{\tblwidth}{0.95\textwidth}

\usepackage{apptools}
\AtAppendix{\counterwithin{theorem}{section}}

\usepackage{algpseudocode}

\usepackage{xr}
\externaldocument{supplemental}

\usepackage[justification=centerlast]{caption}
\usepackage[labelformat=simple]{subcaption}

\captionsetup[subfigure]{font={bf,small},justification=raggedright,skip=-3pt,singlelinecheck=false,margin={5pt,0pt}}

\usepackage[justification=centerlast]{caption}

\usepackage{xcolor}

\definecolor{orange}{HTML}{B67352}

\hyphenpenalty=1000

\begin{document}

\title{Quantum preferential attachment}

\author{Tingyu Zhao}
\thanks{These authors contributed equally.}
\affiliation{Department of Industrial Engineering and Management Sciences, Northwestern University, Evanston, Illinois 60208, USA}
\affiliation{NSF-Simons National Institute for Theory and Mathematics in Biology, Chicago, Illinois 60611, USA}

\author{Balázs~Maga}
\thanks{These authors contributed equally.}
\affiliation{HUN-REN Alfr\'ed R\'enyi Institute of Mathematics, Budapest 1053, Hungary}

\author{Pierfrancesco Dionigi}
\affiliation{HUN-REN Alfr\'ed R\'enyi Institute of Mathematics, Budapest 1053, Hungary}

\author{Gergely Ódor}
\affiliation{Institute for Hygiene and Applied Immunology, Center for Pathophysiology, Infectiology and Immunology, Medical University of Vienna, Vienna 1090, Austria}
\affiliation{HUN-REN Alfr\'ed R\'enyi Institute of Mathematics, Budapest 1053, Hungary}
\affiliation{Department of Network and Data Science, Central European University, Vienna 1100, Austria}

\author{Kyle Soni}
\affiliation{Department of Physics and Astronomy, Northwestern University, Evanston, Illinois 60208, USA}

\author{Anastasiya Salova}
\affiliation{Department of Physics and Astronomy, Northwestern University, Evanston, Illinois 60208, USA}
\affiliation{Department of Engineering Sciences and Applied Mathematics, Northwestern University, Evanston, Illinois 60208, USA}

\author{Bingjie Hao}
\affiliation{Department of Physics and Astronomy, Northwestern University, Evanston, Illinois 60208, USA}

\author{Miklós~Abért}
\email{karinthy@gmail.com}
\affiliation{HUN-REN Alfr\'ed R\'enyi Institute of Mathematics, Budapest 1053, Hungary}

\author{István~A.~Kovács}
\email{istvan.kovacs@northwestern.edu}
\affiliation{Department of Physics and Astronomy, Northwestern University, Evanston, Illinois 60208, USA}
\affiliation{Department of Engineering Sciences and Applied Mathematics, Northwestern University, Evanston, Illinois 60208, USA}
\affiliation{Northwestern Institute on Complex Systems, Northwestern University, Evanston, Illinois 60208, USA}
\affiliation{NSF-Simons National Institute for Theory and Mathematics in Biology, Chicago, Illinois 60611, USA}

\date{\today}

\begin{abstract}
The quantum internet is a rapidly developing technological reality, yet, it remains unclear what kind of quantum network structures might emerge. Since indirect quantum communication is already feasible and preserves absolute security of the communication channel, a new node joining the quantum network does not need to connect directly to its desired target. Instead, in our proposed quantum preferential attachment model, it uniformly randomly connects to any node within the proximity of the target, including, but not restricted to, the target itself. This local flexibility is found to qualitatively change the global network behavior, leading to two distinct classes of complex network architectures, both of which are small-world, but neither of which is scale-free.
Our numerical findings are supported by rigorous analytic results, in a framework that incorporates quantum and classical variants of preferential attachment in a unified phase diagram. Besides quantum networks, we expect that our results will have broad implications for classical scenarios where there is flexibility in establishing new connections.
\end{abstract}

\maketitle

In the advancement of network science, minimal generative models have played a transformative role in revealing how large-scale complex topology can emerge from simple rules.
Most notably, the Barabási--Albert (BA) model~\cite{barabasi1999emergence}, as well as its nonlinear extensions~\cite{krapivsky2000connectivity}, established preferential attachment as a profound mechanism: growing networks can self-organize into broad, heterogeneous structures when incoming nodes connect to existing nodes according to their popularity, modeled as a monotonic function of their degree.
These minimal models have served as benchmarks for understanding how real-world complex networks emerge, and they have been subjected to extensive empirical testing, leading to mixed success~\cite{dorogovtsev2002evolution, newman2005power, broido2019scale}.
An implicit, and therefore unjustified, assumption in such network growth models is that a new node always \emph{precisely} connects to its chosen target. In reality, however, dynamics in complex systems are often noisy or flexible. An incoming node may be guided toward a target but ultimately establish a link to a nearby node instead, reflecting limited control, imperfect information, or other sources of local variability.
As a biological example, in developing brains, neurons form synapses under the guidance of chemical gradients that exhibit intrinsic stochasticity~\cite{tessier1996molecular, mortimer2009bayesian}. Moreover, the genome lacks the capacity to encode wiring rules for every synapse, necessitating coarse-grained developmental mechanisms that naturally produce flexible connectivity~\cite{zador2019critique}. 
In a social context, such as academic mentor--trainee networks, trainees seeking academic success may be drawn to principal investigators with high visibility and large research groups. However, much of their day-to-day mentorship often comes from another senior lab member rather than directly from the principal investigator, whose time is constrained by broader professional responsibilities~\cite{xing2025academic}. 
These examples illustrate that attachment flexibility is not an anomaly but a generic feature of many real-world networks.

Our main motivation in this letter lies in the emerging \emph{quantum internet}, whose potential impact on science and society is widely anticipated to be revolutionary~\cite{kimble2008quantum, wehner2018quantum, rohde2021quantum}. With game-changing capabilities including quantum cryptography~\cite{gisin2002quantum, pirandola2020advances}, distributed quantum sensing~\cite{zhang2021distributed}, and cloud-based quantum computation~\cite{dumitrescu2018cloud}, a future quantum network promises unprecedented security, efficiency, and robustness~\cite{das2018robust, meng2023percolation}.
It is expected that large-scale network topology will play a central role in shaping performance and functionality of the quantum internet~\cite{de2025percolation}, but it remains an open question what topology such a network will ultimately adopt~\cite{das2018robust, shao2025hybrid}. Most state-of-the-art proposals assume two-dimensional grids \cite{das2018robust,de2025percolation}, rather than topologies that could dynamically emerge in a self-organized way.
Our central hypothesis is that attachment flexibility---thus far largely overlooked in the literature---will play a key role in the emergence of large-scale quantum communication networks.
To focus on this effect in its simplest form, we introduce a minimal network growth model that extends standard nonlinear preferential attachment \cite{krapivsky2000connectivity, krapivsky2001organization, dorogovtsev2002evolution} by explicitly incorporating attachment flexibility.
As our key result, such attachment flexibility leads to complex network architectures that are  qualitatively distinct from precise-attachment baselines, and in particular, it \emph{never} produces scale-free networks. Hence, our work highlights how small local perturbations in network dynamics can have far-reaching consequences for global network structure, breaking standard intuition. Our results are supported by numerical simulations, analytic arguments, and, where feasible, rigorous proofs, as listed in the Appendix and Supplemental Material (SM).

\begin{figure}[t!]
    \centering
    \includegraphics[width=200pt]{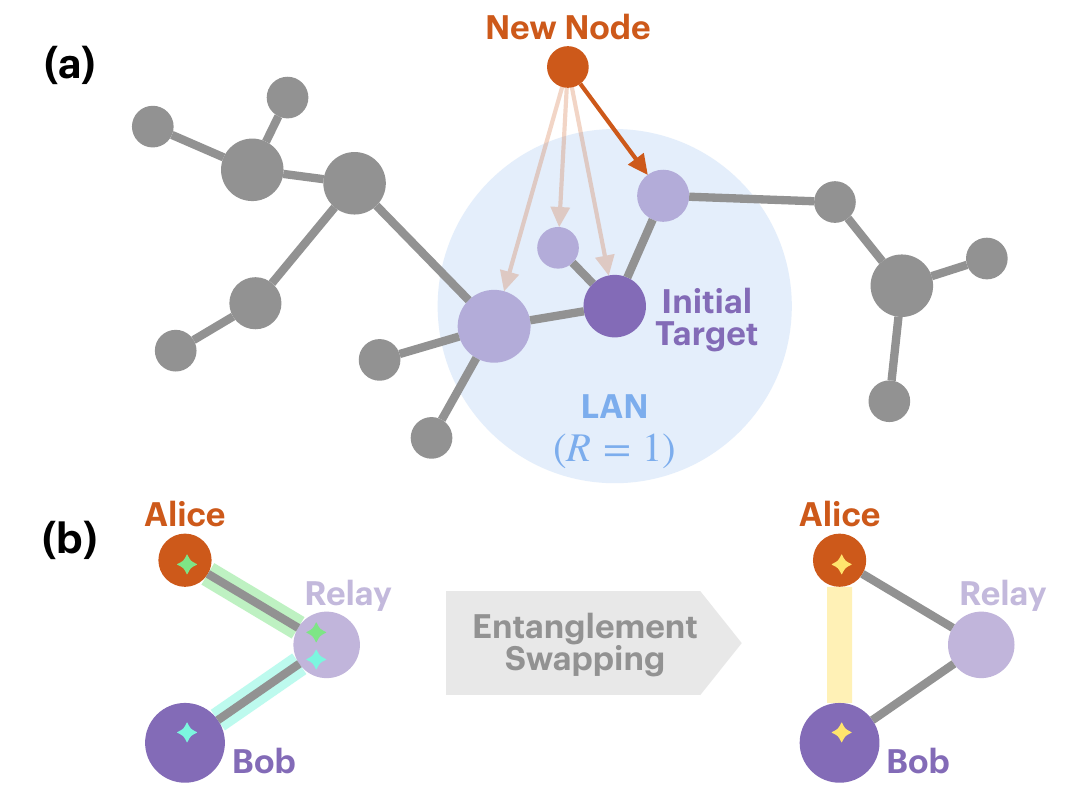}
     \vspace{-2mm}   
 \caption{ (a) \textbf{ The QPA mechanism. } At each iteration, a new node selects an initial target via (nonlinear) preferential attachment. This target induces a Local-Area Network (LAN), displayed with graph distance $R=1$, within which the new node attaches to any member uniformly at random. (b) \textbf{ Entanglement swapping. } Gray edges in the network represent optical fibers that enable entanglement distributions, and entangled qubits are illustrated as stars of matching color. To establish quantum communication between the new node and the initial target through another (relay) node, one performs a Bell measurement at the relay to exchange the green and cyan entanglements for the yellow entanglement.\hfill\hfill}
 \label{fig:model}
\end{figure}

\begin{figure}[t!]
    \centering
    \includegraphics[width=\columnwidth]{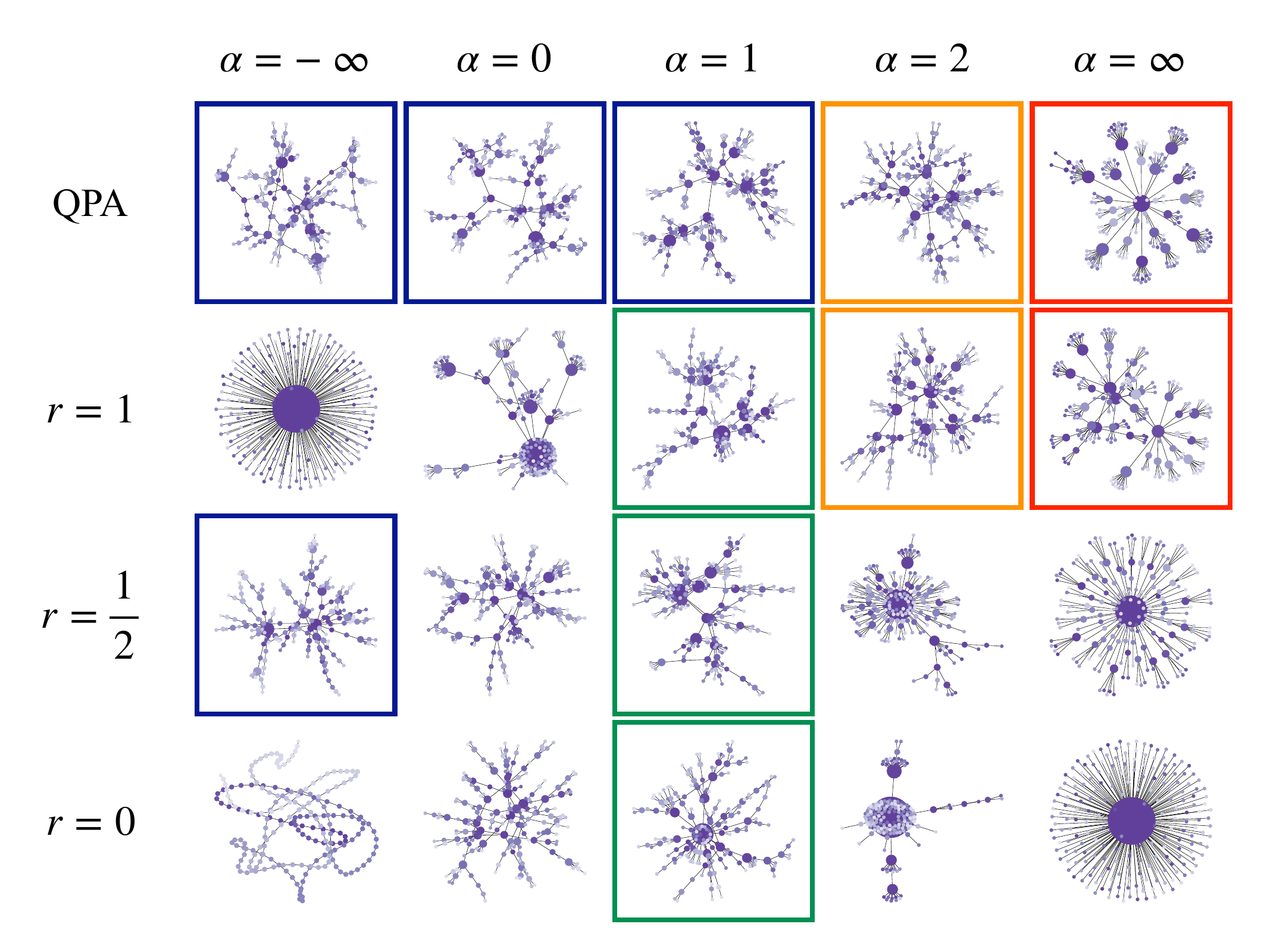}
     \vspace{-7mm}   
 \caption{\textbf{Phase diagram of the QPA and the CR models.} Incoming nodes first identify an initial target based on its degree $d_i$, proportionally to $d_I^\alpha$.
 A subsequent redirection happens with probability $r_i=d_i/(d_i+1)$ in the QPA model, or with $r\in[0,1]$ in the CR model.
 Networks are visualized at $N=200$ nodes, where larger node size reflects higher degree and darker node shade reflects earlier entry into the network. Networks boxed in the same color indicate model equivalence, either exactly or in the asymptotic limit of $N\to\infty$. \hfill\hfill}
 \label{fig:pd} 
\end{figure}

\emph{Models and analyses.---}
Our main model is motivated by optical fiber-based quantum networks, in which communication relies on a hardware layer of fiber links serving as quantum channels.
A fiber link enables the preparation and distribution of entangled qubits~\cite{gisin2007quantum}, a critical resource consumed during quantum communication~\cite{chitambar2019quantum, kim2021one}. When two parties are directly connected by such a link, quantum communication can proceed using the entanglement resources distributed along it. If no direct link exists, perfectly secure communication can still be achieved by \emph{entanglement swapping}~\cite{zukowski1993event, riebe2008deterministic, meng2025path}, through a Bell measurement at the relay node, see Fig.~\ref{fig:model}b. This way, two nodes that are indirectly connected in the hardware network will get directly connected by shared entanglement.
Hence, in a growing quantum network, a new node may identify a desired communication partner as its initial target, but need not establish a direct fiber link to that target, as such a physical link may be less available for various reasons, including limited bandwidth at the target node. Instead, any connection into the Local-Area Network (LAN)~\cite{wang2009local} of the target---defined as the target together with its graph neighborhood---suffices, due to the flexibility provided by entanglement swapping; see Fig.~\ref{fig:model}a. Therefore, existing nodes are attractive not only due to their direct utility (popularity) but also due to their \emph{indirect utility} \cite{lee2023emergence} through the nodes in their neighborhood.

Formally, we propose the \emph{quantum preferential attachment} (QPA) model, parameterized by a single constant, $\alpha$, as follows. Each new node first enters the network following the standard nonlinear preferential attachment rule, selecting an initial target node $i$ with probability proportional to $d_i^\alpha$ \cite{krapivsky2000connectivity, krapivsky2001organization, dorogovtsev2002evolution}, where $d_i$ is the current degree of node $i$ and $\alpha \in \mathbb{R}$ controls the nonlinearity. Then, as the simplest manifestation of attachment flexibility, the new node may connect to any node uniformly at random in the LAN, within a finite graph distance range, $R$, induced by the initial target node. Under the most local setting $R = 1$, corresponding to at most a single entanglement swap, the new node either attaches to the target with probability $1/(d_i + 1)$ or is redirected to one of its neighbors with probability $r_i = d_i/(d_i + 1)$, choosing uniformly among them. We focus on the $R = 1$ case in this work for its analytic simplicity and high practical fidelity---as additional swaps inevitably reduce fidelity \cite{de2025percolation}---and show that it already yields qualitatively different network structures compared to precise direct connections ($R=0$). Keeping in mind the anticipated high cost of establishing quantum hardware links, each new node is considered to connect to exactly one existing node ($m = 1$), so the evolving network remains a tree.
We define the \emph{weight} of node $i$ under QPA as
\begin{align} \label{eq_QPA_weights}
w_i^{\text{Q}}(\alpha)
= \frac{d_i^\alpha}{d_i + 1}
    + \sum_{j \sim i} \frac{d_j^\alpha}{d_j + 1},
\end{align}
where $j \sim i$ denotes the neighbors of $i$. Up to normalization by the total graph weight, $w_i^{\text{Q}}(\alpha)$ gives the probability that the new node connects to node $i$.
Note that in the QPA model, the redirection probability $r_i\in[1/2,1)$ increases monotonically with the degree of the target node $d_i$. To facilitate further analysis of QPA, we introduce the \emph{constant redirection} (CR) model, parameterized by two variables $(\alpha, r)$, in which redirection occurs with a fixed probability $r \in [0,1]$, independent of $d_i$. 
The \emph{weight} of node $i$ under CR is
\begin{align}  \label{eq_CR_weights}
w_i^{\text{CR}}(\alpha,r)
&= (1-r)\, d_i^\alpha + r \sum_{j \sim i} d_j^{\alpha-1}.
\end{align}
As we will show, the two-parameter continuum of the CR model encapsulates a wide range of complex network topologies, some previously studied in the literature, as well as others analyzed here for the first time. Illustration of networks generated by the QPA and CR models at representative values of $\alpha$ and $r$ are shown in Fig.~\ref{fig:pd}, with degree distributions plotted in Fig.~[SM figure placeholder]. Results for key network observables are summarized in Table~\ref{tab:results} and visualized in Fig.~\ref{fig:num_obs}, synthesizing our analytic and numerical contributions.

\begin{table*}[htbp!]
\begin{center}
\caption{\textbf{Analytic results.} Analytic results derived in this work are highlighted in bold, with supporting theorems in the Appendix referenced.
Numerical results are marked with (*). For the $r = \frac{1}{2}$ row, we report the general result for $r \in (0,1)$ when available (†); for the $\alpha = 2$ column, we report the general result for $\alpha \in (1, \infty)$ when available (‡). $N$ denotes the number of nodes in the network, and $\gamma$ stands for  the power-law exponent when a network is scale-free (s.f.) \cite{newman2005power}. For Weibull-like architectures, the numerical estimate of the shape parameter is $k \approx 0.34$.
Our framework incorporates several known models from previous literature:  $\alpha = 1, r\in [0, 1]$ is the Barabási-Albert (BA) model \cite{barabasi1999emergence}; $\alpha = 0, r = 0$ is uniform attachment (UA); $\alpha = 0, r = 1$ is the random friend tree (RFT) \cite{saramaki2004scale,evans2005scale, cannings2013random, krapivsky2017emergent, berry2024random}; while $\alpha = 2, r = 1$ is the k2 model \cite{falkenberg2020identifying, lee2023emergence}. Note that, unlike the other corner cases in the CR model, the $\alpha\to\infty, r=1$  ``rich-club'' is non-trivial and, to the best of our knowledge, is studied here for the first time.\hfill\hfill}
\vskip-2mm
\small
\resizebox{\tblwidth}{!}{
\begin{tblr}{
  colspec = {
    l
    l
    Q[c,m,wd=0.14\tblwidth]
    Q[c,m,wd=0.22\tblwidth]
    Q[c,m,wd=0.16\tblwidth]
    Q[c,m,wd=0.16\tblwidth]
    Q[c,m,wd=0.16\tblwidth]
  },
  rowsep = 1pt,
  stretch = 1,
}
\hline
\hline
\phantom{XX} & Property &
  $\alpha\to-\infty$ & $\alpha=0$ & $\alpha=1$ & $\alpha=2$ & $\alpha\to\infty$ \\
\hline
\SetCell[r=4]{c,m}\rotatebox[origin=c]{90}{QPA}
  & architecture & Weibull-like & Weibull-like & Weibull-like & k2, hierarchical & Rich club \\
& $d_{\max}$ & $O((\log N)^{1/k})$* & $O((\log N)^{1/k})$* & $O((\log N)^{1/k})$* &
  $\boldsymbol{O(N^{\alpha/(2\alpha - 1)})}^\ddagger$ & $\boldsymbol{O(\sqrt{N})}$ \\
& leaves & ${\frac{1}{2}N}$ & $O(N)$* & $O(N)$* & $O(N)^\ddagger$* & $O(N)$* \\
& diameter & $O(\log N)$* & $O(\log N)$* & $O(\log N)$* & $O(\log N)$* & $O(\log N)$* \\
\hline
\SetCell[r=4]{c,m}\rotatebox[origin=c]{90}{$r=1$}
  & architecture & Star & RFT, s.f., $\gamma \approx 1.566$ \cite{krapivsky2017emergent} &
    BA, s.f., $\gamma=3$ & k2, hierarchical & Rich club \\
& $d_{\max}$ & $N-1$ & $O(N)$ \cite{berry2024random} & $O(\sqrt{N})$ &
  $\boldsymbol{O(N^{\alpha/(2\alpha - 1)})}^{\ddagger, \ref{thm:layered_hierarchy}}$ &
  $\boldsymbol{O(\sqrt{N})}^{\ref{thm:largest_degree_alpha=infty_r=1}, \ref{thm:layered_hierarchy}}$ \\
& leaves & $N-1$ & $ N - O( N^{\gamma-1} )$ \cite{berry2024random} & $\frac{2}{3}N$ &
  $O(N)^\ddagger$* & $O(N)$* \\
& diameter & $2$ & $O(\log N)$ \cite{berry2024random} & $O(\log N)$ &
  $O(\log N)$* & $O(\log N)$* \\
\hline
\SetCell[r=4]{c,m}\rotatebox[origin=c]{90}{$r=1/2$}
  & architecture & Weibull-like & Weibull-like & BA, s.f., $\gamma=3$ & Hub and spokes & Star-like \\
& $d_{\max}$ & $O((\log N)^{1/k})$* & $O((\log N)^{1/k})$* & $O(\sqrt{N})^\dagger$ &
  $O(N)$* & $\boldsymbol{O(N)}^{\dagger, \ref{thm:king_alpha_infty}}$ \\
& leaves & $\boldsymbol{rN}^{\dagger, \ref{thm:leaves_alpha=-infty}}$ &
  $\boldsymbol{\frac{1 - \sqrt{1-r}}{r}N}^{\dagger, \ref{thm:leaves_alpha=0}}$ &
  $\frac{2}{3}N^\dagger$ & $O(N)$* &
  $\boldsymbol{(1-r+r^2)N}^{\dagger, \ref{thm:leaves_proportion}}$ \\
& diameter & $O(\log N)$* & $O(\log N)$* & $O(\log N)^\dagger$ &
  $O(1)$* & $\boldsymbol{O(1)}^{\dagger, \ref{thm:diameter_alpha_infty}}$ \\
\hline
\SetCell[r=4]{c,m}\rotatebox[origin=c]{90}{$r=0$}
  & architecture & Chain & UA, exponential & BA, s.f., $\gamma=3$ \cite{barabasi1999emergence} &
    Hub and spokes & Star \\
& $d_{\max}$ & $1$ & $O(\log N)$ & $O(\sqrt{N})$ \cite{bollobas2001degree, mori2005maximum} &
  $O(N)^\ddagger$ \cite{krapivsky2000connectivity} & $N-1$ \\
& leaves & $2$ & $\frac{1}{2}N$ & $\frac{2}{3}N$ \cite{bollobas2001degree} &
  $O(N)^\ddagger$ \cite{krapivsky2000connectivity} & $N-1$ \\
& diameter & $N$ & $O(\log N)$ & $O(\log N)$ \cite{bollobas2004diameter} &
  $O(1)^\ddagger$ \cite{krapivsky2000connectivity} & $2$ \\
\hline
\hline
\end{tblr}
}
\label{tab:results}
\end{center}
\end{table*}

\begin{figure}
    \centering
    \includegraphics[width=\columnwidth]{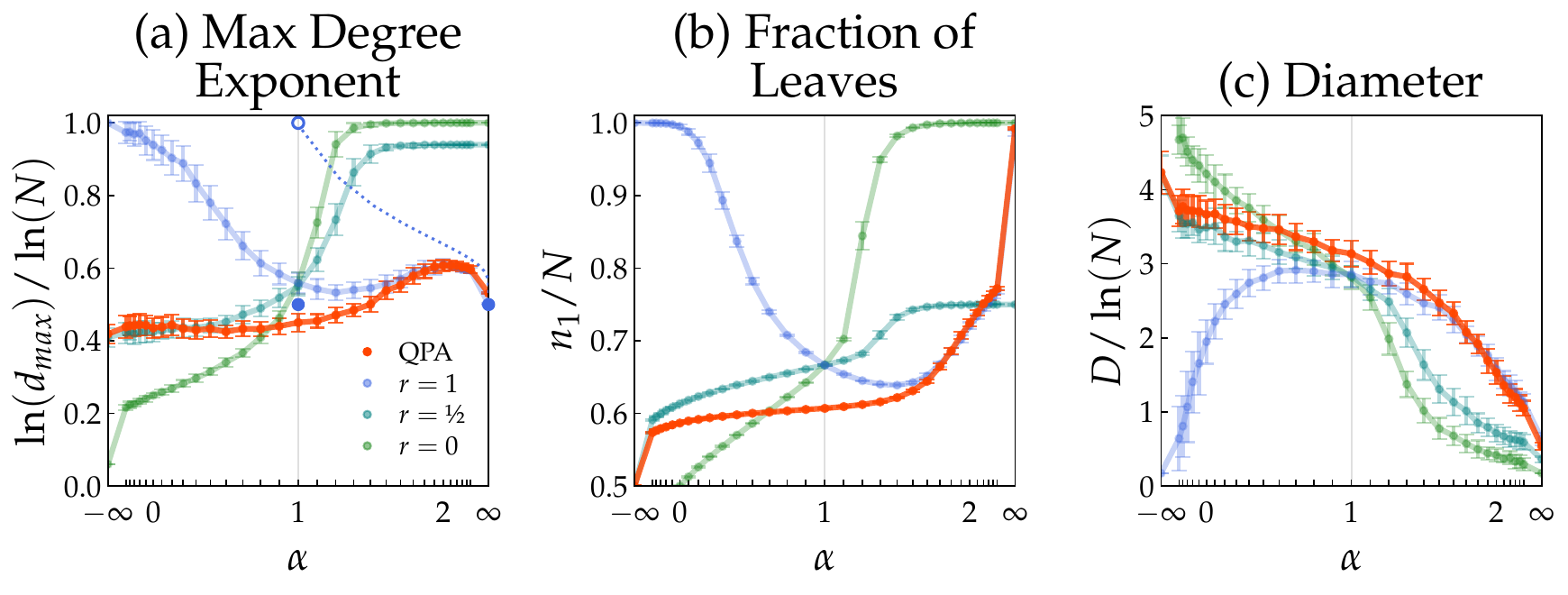}
     \vspace{-7mm}
\caption{\textbf{Simulation results.} Panels show (a) the (effective) maximum-degree exponent, (b) the fraction of leaves, and (c) the network diameter (the longest distance between any two nodes in the network) for QPA and CR networks at $10^5$ nodes. In (a), the blue dashed curves mark the theoretical predictions for $\alpha \geq 1, r=1$ in the infinite-size limit, highlighting the jumps at $\alpha = 1$, based on Theorem~\ref{thm:layered_hierarchy}. Across all observables, the QPA curves meet the $r = 1/2$ curves as $\alpha \to -\infty$, and would overlap with the $r = 1$ curves for $\alpha > 1$ in the absence of finite-size effects. The horizontal axis is presented using $\tanh(\alpha - 1)$. Each data point reports the mean and standard deviation over 50 simulations. For finite-size scalings of these results, see Fig.~[SM figure placeholder]. \hfill\hfill}
 \label{fig:num_obs}
\end{figure}

\begin{figure}[t]
\centering
\includegraphics[width=0.45\textwidth]{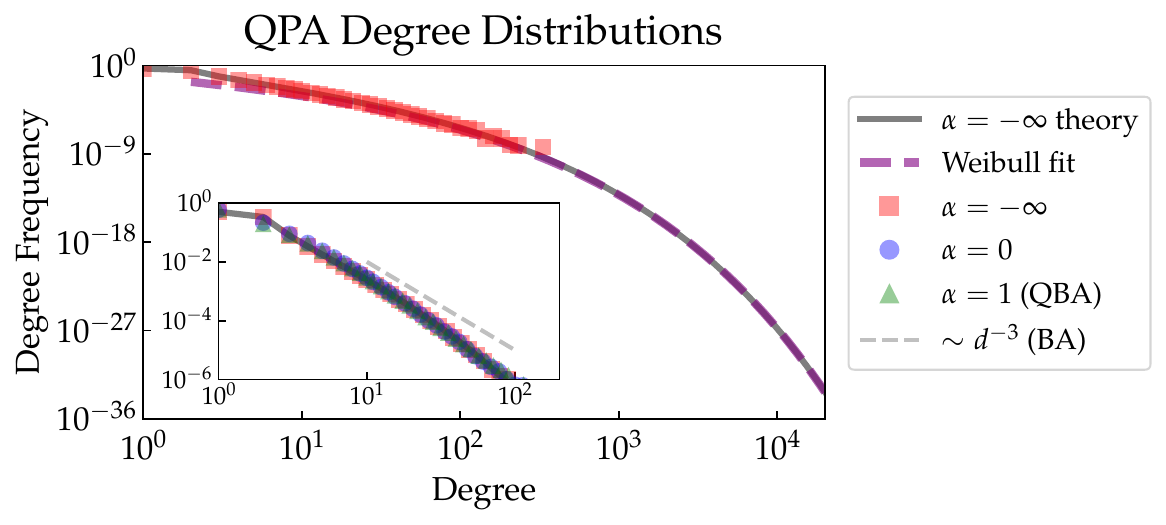}
\vspace{-3mm}   
\caption{\textbf{Degree distributions of the sublinear QPA model for $\alpha \leq 1$.} 
Numerically, we observe a broad but not scale-free degree distribution that appears universal across all $\alpha \leq 1$. 
In particular, Theorem~\ref{thm:master_equation} gives the exact recursive degree distribution for $\alpha = -\infty$, which agrees with the numerical results. A Weibull distribution with two parameters fits both the numerical results and the theory very well.
Networks are grown to $10^5$ nodes, and results are averaged over 50 simulations.\hfill\hfill}
\label{fig:num_deg}
\end{figure}

The linear, $\alpha = 1$, case of CR is the BA tree maintaining its characteristic scale-free behavior, featuring an invariant degree distribution that follows a power-law \cite{newman2005power}, for any choice of $r$ (Theorem~\ref{thm:still_ba}). In contrast, the QPA model at $\alpha = 1$, which we refer to as the quantum Barabási--Albert (QBA) model, produces qualitatively different topology, including a strictly smaller leaf proportion (Theorem~\ref{thm:qba_not_ba}) and a breakdown of scale-freeness (Fig.~\ref{fig:num_deg}). Generally, as we will demonstrate, the QPA model exhibits two qualitatively distinct regimes:
a hierarchical superlinear phase for $\alpha > 1$, and a (sub)linear phase with a broad, Weibull-like degree distribution for $\alpha \le 1$.

In the superlinear phase, $\alpha > 1$, our numerical simulations indicate that the resulting networks evolve into a hierarchical structure~\cite{lee2023emergence}. By analyzing the progression of the total graph weight, we find that, for CR at $r = 1$ this hierarchical structure consists of three tiers: a degree-dominant node whose degree scales with network size $N$ as $N^{\alpha/(2\alpha - 1)}$, a surrounding set of rich followers whose degrees scale as $N^{(\alpha - 1)/(2\alpha - 1)}$, and the remaining poor nodes whose degrees remain finite (Theorem~\ref{thm:layered_hierarchy}). As a novel result, we also find this hierarchy to adhere to a remarkable form of topological balance in the network, whereby the quantity $\eta_i = {\sum_{j \sim i} d_j^{\alpha - 1}}/{d_i}$~\eqref{eq_balance_def} is asymptotically independent of node~$i$ (Remark~\ref{remark:balance}). Therefore, in this phase, simultaneously with the emergence of a strong degree heterogeneity, a kind of homogeneity emerges where all nodes tend towards having the same $\eta$ value.
As the network weight is dominated by the large degree nodes, we conjecture that the QPA model is asymptotically equivalent to the CR model with $r = 1$ at the corresponding value of $\alpha>1$.
Fig.~\ref{fig:num_obs} shows that, in simulations, the QPA observables closely track the $r = 1$ values for most of $\alpha > 1$, but they deviate near $\alpha \to 1^+$ due to pronounced finite-size effects.
Networks in this regime appear to have the small-world property \cite{watts1998collective}, with diameters that scale (at most) logarithmically with network size (Fig.~\ref{fig:num_obs}c). Moreover, as the diameter in terms of $\log N$ is found to be decreasing for larger network sizes (Fig.~[SM figure placeholder]), it likely saturates at a finite value for all $\alpha>1$.
The limit $\alpha \to \infty$ yields a novel ``rich club'' network architecture. 
In this limit, the degree gap between the dominant node and the rich followers vanishes, giving rise to $\sim \sqrt{N}$ hubs each of degree $\sim \sqrt{N}$ (Theorem~\ref{thm:largest_degree_alpha=infty_r=1}), with the role of the maximum-degree node changing infinitely many times (Theorem~\ref{thm:infinitely_many_rivals_r=1}). 
Note that our results indicate a qualitatively similar behavior in the entire $\alpha>1$ regime of the QPA to that of the ``k2'' model, i.e.~CR at $\alpha=2, r=1$ \cite{falkenberg2020identifying, lee2023emergence}. 

In the (sub)linear, $\alpha \leq 1$, phase of the QPA model, we numerically report a universal degree distribution that is broad yet not scale-free, with a tail lighter than a power-law but heavier than an exponential distribution (Fig.~\ref{fig:num_deg}).
In the limiting case $\alpha \to -\infty$, the QPA model is identical to the CR model with $\alpha = -\infty, r = 1/2$, since an incoming node always selects a leaf and is then redirected with probability $1/2$. In this limit, we formulate an exact analytic recursion equation describing the degree distribution (Theorem~\ref{thm:master_equation}), which matches the numerical results. We also find that a Weibull distribution $(k/\lambda)(d/\lambda)^{k-1}e^{-(d/\lambda)^k}$ with scale parameter $\lambda \approx 0.063$ and shape parameter $k \approx 0.34$ provides a good fit to the observed distribution for large degrees; see Fig.~\ref{fig:num_deg} for details. The Weibull-like tail implies that the maximum degree scales as $O((\log N)^{1/k})$, which is numerically confirmed in Fig.~[SM figure placeholder]. Compared to the $\alpha > 1$ phase, the $\alpha \leq 1$ phase exhibits a somewhat larger, yet still logarithmically (small-world) scaling, diameter and a smaller, yet still dominant, fraction of leaves (Fig.~\ref{fig:num_obs}).

In addition to its usefulness in understanding the QPA, the CR model merits detailed exploration on its own due to its rich emergent behavior. 
For $\alpha \neq 1$, introducing redirection $r > 0$ often leads to qualitatively different behavior compared to the $r=0$ case of standard nonlinear preferential attachment \cite{krapivsky2000connectivity, krapivsky2001organization, dorogovtsev2002evolution}. 
Consider the always redirecting case $r = 1$. Now, the star limit is reversed: the $\alpha \to -\infty$ limit, rather than the $\alpha \to \infty$ limit, produces a star.
At $\alpha = 0$, the model becomes the intriguing ``random friend tree'', exhibiting an emergent non-stationary scale-free degree distribution despite being fully local, i.e., requiring no knowledge of existing node degrees \cite{saramaki2004scale, evans2005scale, cannings2013random, krapivsky2017emergent, berry2024random}. 
As seen in Fig.~\ref{fig:num_obs}, unlike for the $r=0$ nonlinear preferential attachment, for $r=1$ all studied observables showcase non-monotonic dependence on $\alpha$. 
Results in the $r \in (0,1)$ region are summarized in Table~\ref{tab:results}.
Unifying the CR model as a two-dimensional phase diagram offers new insights. Even for the well-studied $k2$ model, we uncover a previously unrecognized balance property. More generally, our framework allows different models to be accessed from multiple analytically tractable limits by tuning $\alpha$ or $r$; for instance, the rich-club limit $(\alpha \to \infty, r = 1)$ can be understood as the continuation of both the star-like regime $(\alpha \to \infty, r < 1)$ and the hierarchical regime $(\alpha > 1, r = 1)$.

Our results have direct technological consequences. 
The small-world property implies that at most a $\log N$ number of entanglement swapping operations are needed to establish quantum communication between any pair of users.
Additionally, the fraction of leaves is important as non-leaf users also need to act as relays, equipped with the capability of applying Bell measurements to enable entanglement swapping between their partners.
While the effective value of $\alpha$ in real quantum networks remains to be established, we expect the $\alpha \leq 1$ regime to be more relevant for the quantum internet, as high-degree nodes are likely to incur excessive physical and technological costs.
Overall, having access to generative models like the ones presented here, is an important starting point to explore the functional behavior of the emerging quantum internet.

\emph{Conclusion.---}
In summary, we find that allowing flexibility in where new nodes connect locally renders the standard nonlinear preferential attachment inadequate for describing the resulting global network behavior.
In the $\alpha > 1$ phase, as the growth process disproportionately targets high-degree nodes, new nodes typically attach to either the highest degree node or one of its neighbors. 
This dynamics drives the emergence of a hierarchical architecture centered around a small fraction of highly connected nodes, potentially forming a rich club.
Although we focus here on LAN with $R = 1$, we expect the main qualitative findings to persist for larger graph neighborhoods $R>1$, where larger $R$ leads to $2R+1$ layers in the hierarchy.
Similarly, for any fixed $R$, in the $\alpha \leq 1$ phase, the growth process is biased towards the small-degree periphery, with high-degree nodes gradually becoming less attractive, breaking scale-freeness.
The resulting large degree tail of the degree distribution 
appears to be universal, $\alpha$-independent in this $\alpha\leq 1$ regime, as shown in Fig.~\ref{fig:num_deg}.

As QPA is a minimal model, it is not meant to capture all aspects of real quantum-network systems. For example, we assume that an incoming node connects uniformly to nodes within the LAN of the target, whereas in practice, the choice may depend on availability, bandwidth, geographic distance, reliability, cost, or user-specific considerations. Intrinsic node fitness may also need to be incorporated, in the spirit of the Barabási–Bianconi model~\cite{bianconi2001competition,bianconi2001bose}. Furthermore, node degrees may be bounded by infrastructure constraints, such as limits on the number of optical fibers that can be deployed from a single site. 
Although we focus on optical fiber-based rather than satellite-based quantum networks, future large-scale architectures will likely be a synergy of multiple technologies~\cite{yin2017satellite,nokkala2024complex,Orieux2016,de2023satellite,shao2025hybrid}.
Yet, flexibility in establishing connections will be an important feature, governing the structure and dynamics of the quantum internet.

Generally, the attachment flexibility captured by our minimal model is not unique to quantum systems---it is inherently at play in a variety of classical settings as well, from biological to social networks, where the scale-free property has been previously challenged \cite{broido2019scale}. Quantifying the extent to which attachment flexibility, and therefore indirect utility, contributes to such classical systems is an interesting future direction.

\emph{Code availability.---} Python code for efficient network simulations can be found in \url{https://github.com/markzhao98/QPA}.

\emph{Acknowledgments.---}We thank Xiangyi Meng and Leone Luzzatto for helpful comments and discussions.
This work benefited greatly from the 2023 and 2024 Focused Workshops on Networks and Their Limits held at the Erdős Center (part of the Alfréd Rényi Institute of Mathematics) in Budapest, Hungary, which were supported by the ERC Synergy Grant DYNASNET (810115).
T.Z. and I.A.K. acknowledge support from the NSF–Simons National Institute for Theory and Mathematics in Biology, jointly funded by the U.S. National Science Foundation (Award No. 2235451) and the Simons Foundation (Award No. MP-TMPS-00005320).
I.A.K was also supported by the National Science Foundation under Grant No.~PHY-2310706 of the QIS program in the Division of Physics. 
G.Ó. was supported by the Swiss National Science Foundation (Grant No. P500PT-211129) and by the Austrian Science Fund Cluster of Excellence 
(Grant No. 10.55776/COE7).

\bibliography{refs}

@article{broido2019scale,
  title={Scale-free networks are rare},
  author={Broido, Anna D and Clauset, Aaron},
  journal={Nature Communications},
  volume={10},
  number={1},
  pages={1017},
  year={2019},
  publisher={Nature Publishing Group UK London}
}

@article{cannings2013random,
  title={Random walk attachment graphs},
  author={Cannings, Chris and Jordan, Jonathan},
  year={2013},
    volume = {18},
    journal = {Electronic Communications in Probability},
    number = {none},
    publisher = {Institute of Mathematical Statistics and Bernoulli Society},
    pages = {1--5}
}

@article{bianconi2001bose,
  title={{Bose-Einstein condensation in complex networks}},
  author={Bianconi, Ginestra and Barab{\'a}si, Albert-L{\'a}szl{\'o}},
  journal={Physical Review Letters},
  volume={86},
  number={24},
  pages={5632},
  year={2001},
  publisher={APS}
}

@article{das2018robust,
  title={Robust quantum network architectures and topologies for entanglement distribution},
  author={Das, Siddhartha and Khatri, Sumeet and Dowling, Jonathan P},
  journal={Physical Review A},
  volume={97},
  number={1},
  pages={012335},
  year={2018},
  publisher={APS}
}

@article{bianconi2001competition,
  title={Competition and multiscaling in evolving networks},
  author={Bianconi, Ginestra and Barab{\'a}si, A-L},
  journal={Europhysics Letters},
  volume={54},
  number={4},
  pages={436},
  year={2001},
  publisher={IOP Publishing}
}

@article{wang2009local,
  title={Local preferential attachment model for hierarchical networks},
  author={Wang, Li-Na and Guo, Jin-Li and Yang, Han-Xin and Zhou, Tao},
  journal={Physica A: Statistical Mechanics and its Applications},
  volume={388},
  number={8},
  pages={1713--1720},
  year={2009},
  publisher={Elsevier}
}

@article{falkenberg2020identifying,
  title={Identifying time dependence in network growth},
  author={Falkenberg, Max and Lee, Jong-Hyeok and Amano, Shun-ichi and Ogawa, Ken-ichiro and Yano, Kazuo and Miyake, Yoshihiro and Evans, Tim S and Christensen, Kim},
  journal={Physical Review Research},
  volume={2},
  number={2},
  pages={023352},
  year={2020},
  publisher={APS}
}

@article{lee2023emergence,
  title={Emergence of a mutual-growth mechanism in networks evolved by social preference based on indirect utility},
  author={Lee, Jong-Hyeok and Ogawa, Ken-ichiro},
  journal={Scientific Reports},
  volume={13},
  number={1},
  pages={21680},
  year={2023},
  publisher={Nature Publishing Group UK London}
}

@article{meng2023percolation,
  title={Percolation theories for quantum networks},
  author={Meng, Xiangyi and Hu, Xinqi and Tian, Yu and Dong, Gaogao and Lambiotte, Renaud and Gao, Jianxi and Havlin, Shlomo},
  journal={Entropy},
  volume={25},
  number={11},
  pages={1564},
  year={2023},
  publisher={MDPI}
}

@article{de2025percolation,
  title={Percolation thresholds and connectivity in quantum networks},
  author={De Girolamo, Andrea and Magnifico, Giuseppe and Lupo, Cosmo},
  journal={Quantum Science and Technology},
  year={2025},
  publisher = {IOP Publishing},
  volume = {10},
  number = {3},
  pages = {035047}
}

@article{krapivsky2017emergent,
  title={Emergent network modularity},
  author={Krapivsky, PL and Redner, Sidney},
  journal={Journal of Statistical Mechanics: Theory and Experiment},
  volume={2017},
  number={7},
  pages={073405},
  year={2017},
  publisher={IOP Publishing}
}

@article{berry2024random,
  title={Random friend trees},
  author={Berry, Louigi Addario and Briend, Simon and Devroye, Luc and Donderwinkel, Serte and Kerriou, C{\'e}line and Lugosi, G{\'a}bor},
  journal={arXiv:2403.20185},
  year={2024}
}

@article{pirandola2020advances,
  title={Advances in quantum cryptography},
  author={Pirandola, Stefano and Andersen, Ulrik L and Banchi, Leonardo and Berta, Mario and Bunandar, Darius and Colbeck, Roger and Englund, Dirk and Gehring, Tobias and Lupo, Cosmo and Ottaviani, Carlo and others},
  journal={Advances in optics and photonics},
  volume={12},
  number={4},
  pages={1012--1236},
  year={2020},
  publisher={Optical Society of America}
}

@book{rohde2021quantum,
  title={The quantum internet: The second quantum revolution},
  author={Rohde, Peter P},
  year={2021},
  publisher={Cambridge University Press}
}

@article{evans2005scale,
  title={Scale-free networks from self-organization},
  author={Evans, Tim S and Saram{\"a}ki, JP},
  journal={Physical Review E},
  volume={72},
  number={2},
  pages={026138},
  year={2005},
  publisher={APS}
}

@article{saramaki2004scale,
  title={Scale-free networks generated by random walkers},
  author={Saram{\"a}ki, Jari and Kaski, Kimmo},
  journal={Physica A: Statistical Mechanics and its Applications},
  volume={341},
  pages={80--86},
  year={2004},
  publisher={Elsevier}
}

@article{krapivsky2001organization,
  title={Organization of growing random networks},
  author={Krapivsky, Paul L and Redner, Sidney},
  journal={Physical Review E},
  volume={63},
  number={6},
  pages={066123},
  year={2001},
  publisher={APS}
}

@article{krapivsky2000connectivity,
  title={Connectivity of growing random networks},
  author={Krapivsky, Paul L and Redner, Sidney and Leyvraz, Francois},
  journal={Physical Review Letters},
  volume={85},
  number={21},
  pages={4629},
  year={2000},
  publisher={APS}
}

@article{bollobas2001degree,
  title={The degree sequence of a scale-free random graph process},
  author={Bollob{\'a}s, B{\'{e}}la and Riordan, Oliver and Spencer, Joel and Tusn{\'a}dy, G{\'a}bor},
  journal={Random Structures \& Algorithms},
  volume={18},
  number={3},
  pages={279--290},
  year={2001},
  publisher={Wiley Online Library}
}

@article{bollobas2004diameter,
  title={The diameter of a scale-free random graph},
  author={Bollob{\'a}s, B{\'e}la and Riordan, Oliver},
  journal={Combinatorica},
  volume={24},
  number={1},
  pages={5--34},
  year={2004},
  publisher={Springer}
}

@article{mori2005maximum,
  title={{The maximum degree of the Barab{\'a}si--Albert random tree}},
  author={M{\'o}ri, Tam{\'a}s F},
  journal={Combinatorics, Probability and Computing},
  volume={14},
  number={3},
  pages={339--348},
  year={2005},
  publisher={Cambridge University Press}
}

@article{barabasi1999emergence,
  title={Emergence of scaling in random networks},
  author={Barab{\'a}si, Albert-L{\'a}szl{\'o} and Albert, R{\'e}ka},
  journal={Science},
  volume={286},
  number={5439},
  pages={509--512},
  year={1999},
  publisher={American Association for the Advancement of Science}
}

@article{meng2025path,
  title={Path percolation in quantum communication networks},
  author={Meng, Xiangyi and Hao, Bingjie and R{\'a}th, Bal{\'a}zs and Kov{\'a}cs, Istv{\'a}n A},
  journal={Physical Review Letters},
  volume={134},
  number={3},
  pages={030803},
  year={2025},
  publisher={APS}
}

@article{dumitrescu2018cloud,
  title={Cloud quantum computing of an atomic nucleus},
  author={Dumitrescu, Eugene F and McCaskey, Alex J and Hagen, Gaute and Jansen, Gustav R and Morris, Titus D and Papenbrock, Thomas and Pooser, Raphael C and Dean, David Jarvis and Lougovski, Pavel},
  journal={Physical Review Letters},
  volume={120},
  number={21},
  pages={210501},
  year={2018},
  publisher={APS}
}

@article{Orieux2016,
  title = {Recent advances on integrated quantum communications},
  volume = {18},
  ISSN = {2040-8986},
  number = {8},
  journal = {Journal of Optics},
  publisher = {IOP Publishing},
  author = {Orieux,  Adeline and Diamanti,  Eleni},
  year = {2016},
  pages = {083002}
}

@article{riebe2008deterministic,
  title={Deterministic entanglement swapping with an ion-trap quantum computer},
  author={Riebe, Mark and Monz, T and Kim, Kihwan and Villar, AS and Schindler, Philipp and Chwalla, M and Hennrich, Markus and Blatt, R},
  journal={Nature Physics},
  volume={4},
  number={11},
  pages={839--842},
  year={2008},
  publisher={Nature Publishing Group UK London}
}

@article{zukowski1993event,
  title={{``Event-ready-detectors' Bell experiment via entanglement swapping}},
  author={Zukowski, Marek and Zeilinger, Anton and Horne, M and Ekert, Artur},
  journal={Physical Review Letters},
  volume={71},
  number={26},
  year={1993}
}

@article{dorogovtsev2002evolution,
  title={Evolution of networks},
  author={Dorogovtsev, Sergey N and Mendes, Jose FF},
  journal={Advances in Physics},
  volume={51},
  number={4},
  pages={1079--1187},
  year={2002},
  publisher={Taylor \& Francis}
}

@article{gisin2007quantum,
  title={Quantum communication},
  author={Gisin, Nicolas and Thew, Rob},
  journal={Nature Photonics},
  volume={1},
  number={3},
  pages={165--171},
  year={2007},
  publisher={Nature Publishing Group UK London}
}

@article{gisin2002quantum,
  title={Quantum cryptography},
  author={Gisin, Nicolas and Ribordy, Gr{\'e}goire and Tittel, Wolfgang and Zbinden, Hugo},
  journal={Reviews of Modern Physics},
  volume={74},
  number={1},
  pages={145},
  year={2002},
  publisher={APS}
}

@article{kimble2008quantum,
  title={The quantum internet},
  author={Kimble, H Jeff},
  journal={Nature},
  volume={453},
  number={7198},
  pages={1023--1030},
  year={2008},
  publisher={Nature Publishing Group}
}

@article{wehner2018quantum,
  title={Quantum internet: A vision for the road ahead},
  author={Wehner, Stephanie and Elkouss, David and Hanson, Ronald},
  journal={Science},
  volume={362},
  number={6412},
  pages={eaam9288},
  year={2018},
  publisher={American Association for the Advancement of Science}
}

@article{de2023satellite,
  title={Satellite-based quantum information networks: use cases, architecture, and roadmap},
  author={de Forges de Parny, Laurent and Alibart, Olivier and Debaud, Julien and Gressani, Sacha and Lagarrigue, Alek and Martin, Anthony and Metrat, Alexandre and Schiavon, Matteo and Troisi, Tess and Diamanti, Eleni and others},
  journal={Communications Physics},
  volume={6},
  number={1},
  pages={12},
  year={2023},
  publisher={Nature Publishing Group UK London}
}

@article{yin2017satellite,
  title={Satellite-based entanglement distribution over 1200 kilometers},
  author={Yin, Juan and Cao, Yuan and Li, Yu-Huai and Liao, Sheng-Kai and Zhang, Liang and Ren, Ji-Gang and Cai, Wen-Qi and Liu, Wei-Yue and Li, Bo and Dai, Hui and others},
  journal={Science},
  volume={356},
  number={6343},
  pages={1140--1144},
  year={2017},
  publisher={American Association for the Advancement of Science}
}

@article{chitambar2019quantum,
  title={Quantum resource theories},
  author={Chitambar, Eric and Gour, Gilad},
  journal={Reviews of Modern Physics},
  volume={91},
  number={2},
  pages={025001},
  year={2019},
  publisher={APS}
}

@article{kim2021one,
  title={One-shot static entanglement cost of bipartite quantum channels},
  author={Kim, Ho-Joon and Lee, Soojoon},
  journal={Physical Review A},
  volume={103},
  number={6},
  pages={062415},
  year={2021},
  publisher={APS}
}

@article{nokkala2024complex,
  title={Complex quantum networks: a topical review},
  author={Nokkala, Johannes and Piilo, Jyrki and Bianconi, Ginestra},
  journal={Journal of Physics A: Mathematical and Theoretical},
  volume={57},
  number={23},
  pages={233001},
  year={2024},
  publisher={IOP Publishing}
}

@article{zhang2021distributed,
  title={Distributed quantum sensing},
  author={Zhang, Zheshen and Zhuang, Quntao},
  journal={Quantum Science and Technology},
  volume={6},
  number={4},
  pages={043001},
  year={2021},
  publisher={IOP Publishing}
}

@article{newman2005power,
  title={{Power laws, Pareto distributions and Zipf's law}},
  author={Newman, Mark EJ},
  journal={Contemporary Physics},
  volume={46},
  number={5},
  pages={323--351},
  year={2005},
  publisher={Taylor \& Francis}
}

@article{xing2025academic,
  title={Academic mentees thrive in big groups, but survive in small groups},
  author={Xing, Yanmeng and Ma, Yifang and Fan, Ying and Sinatra, Roberta and Zeng, An},
  journal={Nature Human Behaviour},
  volume={9},
  pages={902--916},
  year={2025},
  publisher={Nature Publishing Group UK London}
}

@article{tessier1996molecular,
  title={The molecular biology of axon guidance},
  author={Tessier-Lavigne, Marc and Goodman, Corey S},
  journal={Science},
  volume={274},
  number={5290},
  pages={1123--1133},
  year={1996},
  publisher={American Association for the Advancement of Science}
}

@article{mortimer2009bayesian,
  title={A Bayesian model predicts the response of axons to molecular gradients},
  author={Mortimer, Duncan and Feldner, Julia and Vaughan, Timothy and Vetter, Irina and Pujic, Zac and Rosoff, William J and Burrage, Kevin and Dayan, Peter and Richards, Linda J and Goodhill, Geoffrey J},
  journal={Proceedings of the National Academy of Sciences},
  volume={106},
  number={25},
  pages={10296--10301},
  year={2009},
  publisher={National Academy of Sciences}
}

@article{zador2019critique,
  title={A critique of pure learning and what artificial neural networks can learn from animal brains},
  author={Zador, Anthony M},
  journal={Nature Communications},
  volume={10},
  number={1},
  pages={3770},
  year={2019},
  publisher={Nature Publishing Group UK London}
}

@article{shao2025hybrid,
  title={Hybrid satellite-fiber quantum network},
  author={Shao, Yanxuan and Guha, Saikat and Motter, Adilson E},
  journal={Physical Review Applied},
  volume={24},
  number={2},
  pages={024033},
  year={2025},
  publisher={APS}
}

@book{klenke2008probability,
  title={Probability theory: a comprehensive course},
  author={Klenke, Achim},
  year={2008},
  publisher={Springer}
}

@article{watts1998collective,
  title={Collective dynamics of ‘small-world’ networks},
  author={Watts, Duncan J and Strogatz, Steven H},
  journal={Nature},
  volume={393},
  number={6684},
  pages={440--442},
  year={1998},
  publisher={Nature Publishing Group}
}

@article{BorkarMeyn,
author = {Borkar, V. S. and Meyn, S. P.},
title = {The O.D.E. Method for Convergence of Stochastic Approximation and Reinforcement Learning},
journal = {SIAM Journal on Control and Optimization},
volume = {38},
number = {2},
pages = {447-469},
year = {2000},
}

\clearpage
\appendix


\section{Analytic Results for the CR model}

\begin{theorem} \label{thm:still_ba}
For $\alpha = 1$, the CR model is equivalent to the Barabási--Albert (BA) model for all $r \in [0,1]$.
\end{theorem}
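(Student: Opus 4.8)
The plan is to show that at $\alpha=1$ the redirection parameter $r$ drops out of the CR attachment weights entirely, so that the one-step transition law of the growth process coincides with that of linear preferential attachment for every $r\in[0,1]$.

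First I would substitute $\alpha=1$ into the CR weight \eqref{eq_CR_weights}. The key simplification occurs in the redirection term: since $d_j^{\alpha-1}=d_j^{0}=1$, the sum over neighbors collapses to a pure neighbor count,
\begin{equation}
\sum_{j\sim i} d_j^{\alpha-1}\Big|_{\alpha=1}=\sum_{j\sim i}1=d_i .
\end{equation}
Substituting this back gives
\begin{equation}
w_i^{\text{CR}}(1,r)=(1-r)\,d_i+r\,d_i=d_i ,
\end{equation}
which is manifestly independent of $r$ and linear in the degree.

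Next I would translate this weight identity into an equality of stochastic processes. Because, up to normalization, the probability that an incoming node attaches to $i$ is $w_i^{\text{CR}}/\sum_k w_k^{\text{CR}}$, the displayed identity yields
\begin{equation}
\P(\text{new node}\to i)=\frac{d_i}{\sum_k d_k},
\end{equation}
which is exactly the linear preferential-attachment rule defining the BA tree. Starting both models from the same seed graph and matching these one-step transition kernels, a short induction on $N$—using that the tree-valued process is Markov in $N$—shows that the CR process at $\alpha=1$ and the BA process induce the same distribution on the entire sequence of trees, establishing the claimed equivalence.

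Honestly, there is no real obstacle here: the entire content is the algebraic cancellation $(1-r)+r=1$, exposed once the $\alpha=1$ redirection term is recognized as a degree count. The only points requiring (minimal) care are confirming the interpretation of the redirection term in \eqref{eq_CR_weights}—namely that a neighbor $j$ of $i$ is selected as initial target with probability $\propto d_j^{\alpha}$ and then redirected to $i$ with probability $r/d_j$, producing the factor $d_j^{\alpha-1}$—and noting that equality of transition kernels together with a common initial condition suffices to conclude equality in law of the full graph processes.
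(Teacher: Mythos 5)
Your proof is correct and follows essentially the same route as the paper's: substitute $\alpha=1$ into \eqref{eq_CR_weights}, note $\sum_{j\sim i} d_j^{0}=d_i$, and conclude $w_i^{\text{CR}}(1,r)=(1-r)d_i+rd_i=d_i$ independently of $r$. The extra remarks about matching one-step transition kernels and inducting on $N$ are a harmless elaboration of what the paper leaves implicit.
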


\begin{proof}
From \eqref{eq_CR_weights}, 
\begin{align*}
w_i^{\text{CR}}(1,r)
&= (1-r)d_i + r \sum_{j \sim i} 1 \\
&= (1-r)d_i + r d_i \\
&= d_i = w_i^{\text{CR}}(1,0) \equiv w_i^{\text{BA}}.
\end{align*}
\end{proof}

\begin{theorem}\label{thm:king_alpha_infty}
For $\alpha = \infty$ and $r < 1$, the highest-degree node stabilizes almost surely, hence gaining degree at linear rate $1-r$.
\end{theorem}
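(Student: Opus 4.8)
The plan is to exploit the drastic simplification of the dynamics in the limit $\alpha\to\infty$. Since the initial target is selected with probability proportional to $d_i^\alpha$, in this limit it is almost surely the current maximum-degree node, which I call the \emph{leader} $v$. At each step exactly one of two events occurs: with probability $1-r$ a new leaf attaches directly to $v$, raising $d_v$ by one, and with probability $r$ the leaf is redirected to a neighbor of $v$ chosen uniformly among its $d_v$ neighbors. Thus every node other than $v$ and its neighbors has frozen degree, and since the graph is a tree a neighbor of $v$ accrues degree only through these redirections. Consequently the leader can be overtaken only by one of its current neighbors, and only at a step where the top two degrees are tied; I formalize this through the global gap $X_n=\Delta_n-\Delta'_n$ between the largest degree $\Delta_n$ and the second largest $\Delta'_n$, which must reach $0$ for the leadership to change.

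Two ingredients pin down the relevant scales. First, the maximum degree increases by one with probability at least $1-r$ at every step, so comparison with a binomial variable and the strong law give $\Delta_n\ge(1-r-\varepsilon)n$ for all large $n$, almost surely. Second, a purely combinatorial budget controls multiplicities: after $n$ steps the tree has total degree $2n+O(1)$, so at most $O(1/\delta)$ nodes can have degree at least $\delta n$. In particular, whenever $X_n\le \Delta_n/2$ both of the top two nodes have degree at least $(1-r-\varepsilon)n/2$, so only $O(1)$ nodes lie that high; hence the number $c_n$ of second-place nodes that are neighbors of the leader is bounded by a constant in this regime, circumventing the need to track the runner-up's growth directly.

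The core step combines these facts into a drift argument for $X_n$. The gap increases by one with probability $1-r$ (the leader grows) and decreases by one only when a redirection lands on one of the $c_n$ top neighbors, with probability at most $r\,c_n/\Delta_n$; otherwise it is unchanged. In the dangerous region $1\le X_n\le \Delta_n/2$ this downward probability is therefore $O(1/n)$ while the upward drift is the positive constant $1-r$. A gambler's-ruin estimate then shows that, once $\Delta_n$ is large, the probability that $X_n$ ever descends from $\Delta_n/2$ to $0$ is super-exponentially small in $\Delta_n$, and summing these bounds along the geometric sequence of levels $\Delta_n\in[2^k,2^{k+1})$ yields a convergent series. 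By the Borel--Cantelli lemma the event $\{X_n=0\}$ occurs only finitely often almost surely, so after some finite random time the leadership never changes and a fixed node $v^\ast$ is the leader forever. The linear rate is then immediate: for $n$ beyond that time $v^\ast$ receives a direct attachment with probability exactly $1-r$, so $d_{v^\ast}(n)/n\to 1-r$ by the strong law.

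I expect the main obstacle to be making the downward control of $X_n$ fully rigorous near ties. Although each individual down-move is rare, the summed downward rate $\sum_n r/\Delta_n\asymp\sum_n r/((1-r)n)$ diverges, so one cannot argue that decreases simply stop; the delicate point is to show that $X_n$ nonetheless returns to $0$ only finitely often. This hinges on using the combinatorial budget bound on $c_n$ precisely in the small-gap region (to keep the per-step down-probability $O(1/n)$), on a careful gambler's-ruin or supermartingale estimate across the growing window $[1,\Delta_n/2]$, and on proper handling of the transient early phase, where genuine leadership changes and tie-breaking among equal-degree maxima do occur.
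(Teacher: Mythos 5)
Your route differs from the paper's in an instructive way. The paper first shows (via runs of three consecutive direct attachments and Borel--Cantelli) that the gap $D_1-D_2$ exceeds $2$ infinitely often, and then, at any such time, bounds the probability that a new king \emph{ever} emerges by $O(D_1^{-2})$ by tracking a separate ``advantage'' random walk for each current and future neighbor of the king and union-bounding their ruin probabilities. You instead work directly with the single process $X_n=\Delta_n-\Delta_n'$ and replace the per-neighbor bookkeeping with a combinatorial observation: since $\Delta_n\ge(1-r-\varepsilon)n$ and the total degree of a tree is $2(n-1)$, only $O(1)$ nodes can hold second place whenever $X_n\le\Delta_n/2$, so the downward rate of $X_n$ is $O(1/n)$ in that region. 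This is a genuine simplification --- any usurper must pass through a tie, so monitoring $X_n$ alone suffices and the ``future neighbors'' part of the paper's argument is not needed --- and these ingredients (linear growth of $\Delta_n$, the degree-budget bound on rivals, the drift of $X_n$) are all correct.

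The gap is in the concluding step. You sum the probabilities of the events ``$X_n$ descends from $\Delta_n/2$ to $0$'' over geometric levels of $\Delta_n$ and invoke Borel--Cantelli to conclude that $\{X_n=0\}$ occurs finitely often. That inference does not follow: $X_n$ can return to $0$ after a short excursion that never reaches $\Delta_n/2$, and such returns are not among the events whose probabilities you summed. You flag this worry yourself, but the tools you list do not yet close it. A clean repair is per-excursion: if the gap equals $1$ at time $t$, a gambler's-ruin domination (up-probability $1-r$, down-probability at most $C/t$ throughout $[1,\Delta_n/2]$, plus your super-exponential bound for re-entry from above $\Delta_n/2$) gives a return-to-zero probability of at most $C'/t$; since the $j$-th return time is at least $j$, the probability of at least $J$ returns is at most $\prod_{j\le J}\min(1,C'/j)\to 0$, so ties occur finitely often almost surely. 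Note that a plain first Borel--Cantelli over excursions fails here because $\sum_j C'/j$ diverges --- this is exactly why the paper restarts its argument only at times when the gap already exceeds $2$, so that each restart contributes a summable $O(D_1^{-2})$. Either repair works; without one of them the statement ``finitely many ties'' is not established.
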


\begin{proof}

Let $D_1(N)$ and $D_2(N)$ denote, respectively, the highest and second-highest degrees at size $N$. We refer to the highest-degree node(s) as the \emph{king(s)}.

\begin{lemma} \label{lemma:high_differences}
With probability $1$, the event $D_1(N) - D_2(N) > 2$ occurs for infinitely many values of $N$.
\end{lemma}

\begin{proof}[Proof of lemma \ref{lemma:high_differences}]
If there is no redirection in three consecutive steps, then after these three steps we have $D_1 - D_2 > 2$. Indeed, after the first step we obtain a unique king with $D_1 - D_2 = 1$, and in the next two steps this king gains two new leaves. The probability of no redirection in three consecutive steps is $(1-r)^3$. Partitioning the timeline into blocks of three steps, these are independent events whose probabilities form a divergent series. By the Borel--Cantelli lemma \cite{klenke2008probability}, infinitely many such events occur almost surely.
\end{proof}

\begin{lemma} \label{lemma:low_prob_king_change_for_high_diff}
If $D_1 > D_2 + 2$, then the probability that a new king emerges is $O(D_1^{-2})$.
\end{lemma}

\begin{proof}[Proof of lemma \ref{lemma:low_prob_king_change_for_high_diff}]
Assume that at size $N$ we have $D_1(N) > D_2(N) + 2$. Let $D$ denote the degree of the king at this time. The key observation is that the king acquires new neighbors at rate $1-r$, while its neighbors acquire new neighbors at rate at most $r/D$. For large $D$, the former dominates.

The probability of a new king arising is maximized when all neighbors of the king have degree $D - 3$. Construct a sequence of “advantages’’ the king holds over its neighbors, including neighbors that have yet to appear (i.e., nodes that will later attach to the king). Initialize
$
A(0) = (A_1(0), A_2(0), \dots),
$
where the first $D$ entries are $3$, and for $k > D$, $A_k(0) = k - 1$, representing the king’s advantage over its $k$-th future neighbor at the moment of that neighbor’s birth. Advantages are marked active or inactive depending on whether the corresponding node is already present; initially, the first $D$ entries are active.

Before a new king emerges, the evolution is as follows: in each step, with probability $1-r$, all active advantages increase by $1$ and the first inactive advantage becomes active. With probability $r$, a uniformly chosen active advantage decreases by $1$. By a union bound,
\begin{align*}
\mathbb{P}(\text{a new king arises})
&\le \sum_{k=1}^{\infty} \mathbb{P}(\text{the $k$-th advantage hits zero}).
\end{align*}
Once activated, the $k$-th advantage performs a random walk, where it increases by $1$ with probability $1-r$ and decreases by $1$ with probability
\begin{align*}
\frac{r}{\#\{\text{active advantages}\}} \ge \frac{r}{D}.
\end{align*}
Thus its hitting probability of $0$ is dominated by a random walk starting from $A_k(0)$ with normalized transition probabilities (eliminating laziness)
\begin{align*}
p &= \frac{1-r}{1-r + \frac{r}{D}}, \\
q &= 1 - p.
\end{align*}
Let $\rho_m$ denote the probability of hitting $0$ when starting from $m$. It is well known that $\rho_m = (\rho_1)^m$. Conditioning on the first step,
$
\rho_1 = q + p \rho_2 = q + p \rho_1^2,
$
so $\rho_1 = 1/p - 1$ (the other root $\rho_1 = 1$ is invalid since $p > 1/2$ for sufficiently large $D$). Thus
$
\rho_m = (1/p - 1)^m.
$
Consequently,
\begin{align*}
\mathbb{P}(\text{a new king arises})
&\le D (1/p - 1)^3 + \sum_{k=D}^{\infty} (1/p - 1)^k \\
&= O(D^{-2}).
\end{align*}
\end{proof}

Assume now that a new king emerges infinitely often with positive probability. Then, by Lemma~\ref{lemma:high_differences}, with positive probability there exist infinitely many values $N_1, N_2, \dots$ for which $D_1(N) - D_2(N) > 2$, and a new king must emerge between each pair of consecutive values. However, by Lemma~\ref{lemma:low_prob_king_change_for_high_diff} and the Borel--Cantelli lemma, this event has probability zero. This contradiction completes the proof.
\end{proof}

\begin{theorem} \label{thm:diameter_alpha_infty}
For $\alpha = \infty$ and $r < 1$, the diameter of the model is finite almost surely.
\end{theorem}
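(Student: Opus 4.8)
The plan is to leverage Theorem~\ref{thm:king_alpha_infty}: for $\alpha=\infty$ and $r<1$ the highest-degree node stabilizes almost surely, so there is an almost surely finite (random) time $N_0$ after which a single node $v^\ast$ remains the king forever. I would first record that the network $G_{N_0}$ built by that time is a finite tree, hence has finite diameter; in particular $v^\ast$ has some finite eccentricity $e_0 := \operatorname{ecc}_{G_{N_0}}(v^\ast)$.

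The heart of the argument is a structural claim about the dynamics after $N_0$. Because $\alpha=\infty$, the initial target of every incoming node is the unique maximum-degree node, which for $N\ge N_0$ is $v^\ast$; the node then attaches to $v^\ast$ with probability $1-r$ or to a uniformly chosen neighbor of $v^\ast$ with probability $r$. Thus every node born after $N_0$ lands at graph distance $1$ or $2$ from $v^\ast$. I would then isolate the key freezing phenomenon: a node at distance $2$ from $v^\ast$ is a neighbor of some neighbor of $v^\ast$, but is itself neither $v^\ast$ nor a neighbor of $v^\ast$, so it is never again selected as a target and never receives a redirected edge. Hence all distance-$2$ nodes created after $N_0$ remain leaves permanently, and no node added after $N_0$ ever exceeds distance $2$ from $v^\ast$.

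Since the model is a tree, each new node is attached as a leaf, which does not alter the pairwise distances among previously present vertices. Consequently the eccentricity of $v^\ast$ never grows beyond $\max(e_0,2)$: the pre-$N_0$ vertices keep their (finite) distances to $v^\ast$, while every later vertex stays within distance $2$. Using the triangle inequality through $v^\ast$, for any two vertices $x,y$ in $G_N$ with $N\ge N_0$ one gets $d(x,y)\le d(x,v^\ast)+d(v^\ast,y)\le 2\max(e_0,2)$, so $\operatorname{diam}(G_N)\le 2\max(e_0,2)$ for all $N\ge N_0$. Combined with the trivial bound $\operatorname{diam}(G_N)< N_0$ for $N<N_0$, this yields $\sup_N \operatorname{diam}(G_N)<\infty$ almost surely.

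The main obstacle is making the structural claim fully rigorous rather than merely intuitive: one must carefully justify that in the $\alpha=\infty$ limit the target is exactly $v^\ast$ for every $N\ge N_0$ (handling transient ties before stabilization via the same notion of king used in Theorem~\ref{thm:king_alpha_infty}), and that distance-$2$ nodes are genuinely excluded from ever being a target or a neighbor of a target. The remaining ingredients—finiteness of $e_0$, the tree-leaf distance-preservation property, and the eccentricity-to-diameter inequality—are routine once this structural description is secured.
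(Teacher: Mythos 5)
Your proposal is correct and is exactly the argument the paper intends: the paper's proof is the one-liner ``Theorem~\ref{thm:king_alpha_infty} immediately implies this result,'' and your expansion (after the king stabilizes, every new node lands within distance $2$ of it, leaf additions in a tree preserve existing distances, so the eccentricity of the king and hence the diameter stay bounded) is the natural way to unpack that implication. No gaps; the only point worth keeping explicit is the one you already flag, namely that stabilization gives a unique strict maximum-degree node from some finite time on, so the $\alpha=\infty$ targeting rule concentrates entirely on the king thereafter.
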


\begin{proof}
Theorem \ref{thm:king_alpha_infty} immediately implies this result.
\end{proof}

\begin{theorem} \label{thm:leaves_proportion}
For $\alpha = \infty$ and $r \in (0, 1)$, 
\begin{align*}
p_k \to
\begin{cases}
1 - r + r^2, & \text{if $k = 1$,}\\[4pt]
(1-r) r^{\,k-1}, & \text{if $k \ge 2$,}
\end{cases}
\end{align*}
where $p_k$ denotes the proportion of degree-$k$ nodes, excluding the highest-degree node.
\end{theorem}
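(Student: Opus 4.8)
The plan is to reduce the $\alpha=\infty$ dynamics to a transparent two-type growth process and read off the degree distribution from a rate equation. By Theorem~\ref{thm:king_alpha_infty} the highest-degree node (the king) stabilizes almost surely, so eventually it is the unique maximum with a linearly growing degree gap; since at $\alpha=\infty$ the initial target is the maximum-degree node, it coincides with the king for all large $N$ almost surely. Conditioned on this, each incoming node either attaches directly to the king, with probability $1-r$, becoming a degree-$1$ king-neighbor, or is redirected to a uniformly chosen king-neighbor, with probability $r$, becoming a permanent leaf at distance $2$ from the king while incrementing the chosen neighbor's degree. The key structural observation is that the non-king vertices are exactly the king-neighbors (distance $1$) together with their pendant leaves (distance $2$); only king-neighbors can ever gain degree, since redirection targets are by definition neighbors of the king, so every distance-$2$ vertex stays a leaf forever.

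Next I would write the rate equation for $M_k(N)$, the number of king-neighbors of degree $k$, with $D(N)=\sum_k M_k(N)$ the king's degree. A direct attachment creates a degree-$1$ king-neighbor at rate $1-r$, while a redirection moves a king-neighbor from degree $k-1$ to $k$ at rate $r\,M_{k-1}/D$. Because the king grows only through direct attachments, $D(N)$ is a sum of independent $\mathrm{Bernoulli}(1-r)$ increments and concentrates at $(1-r)N$ by a Chernoff bound. Substituting the linear ansatz $M_k(N)\sim c_kN$ and $D\sim(1-r)N$ into the expected increments gives, for $k\ge 2$,
\begin{align*}
c_k \;=\; \frac{r}{1-r}\bigl(c_{k-1}-c_k\bigr)\quad\Longrightarrow\quad c_k \;=\; r\,c_{k-1},
\end{align*}
together with the boundary relation $c_1=(1-r)-\tfrac{r}{1-r}\,c_1$, i.e.\ $c_1=(1-r)^2$. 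Hence the king-neighbor degrees follow an asymptotically geometric law with ratio $r$, so their normalized distribution is $q_k=M_k/D\to(1-r)\,r^{k-1}$.

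To finish I would assemble the full non-king distribution. For $k\ge 2$ the degree-$k$ non-king vertices are precisely the king-neighbors of degree $k$, so they inherit the geometric law $\propto r^{k-1}$ of the previous step. For $k=1$ one must add the distance-$2$ leaves: these are produced exactly in redirection events, hence number $\sim rN$, and combining them with the $\sim(1-r)^2N$ degree-$1$ king-neighbors yields the leaf proportion $p_1\to(1-r)^2+r=1-r+r^2$, matching the claimed value. The geometric ratio $r$ in the tail together with the additive leaf boost at $k=1$ reproduce the stated case distinction.

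The main obstacle I anticipate is upgrading the heuristic linear ansatz $M_k(N)\sim c_kN$ to a rigorous almost-sure limit. The natural route is induction on $k$: first establish $D(N)=(1-r)N\,(1+o(1))$ on a high-probability event, then treat $M_k(N)$ as a Doob martingale and apply Azuma--Hoeffding, feeding the established concentration of $M_{k-1}$ into the drift at level $k$. Two points need care: the normalization $D(N)$ is itself random and sits in the denominator of the transition rates, and the reduction ``initial target $=$ king'' must hold uniformly for large $N$; both are controlled by intersecting with the stabilization event of Theorem~\ref{thm:king_alpha_infty} and the Chernoff event for $D(N)$. A cleaner alternative for the king-neighbor law models the degree of a king-neighbor born at time $t$ as $1+\mathrm{Pois}(\lambda)$ with $\lambda=\tfrac{r}{1-r}\ln(N/t)$, and integrates over the (asymptotically uniform) birth times, directly recovering the geometric distribution $(1-r)\,r^{k-1}$.
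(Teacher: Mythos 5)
Your proposal is correct and follows essentially the same route as the paper: condition on the stabilization of the king from Theorem~\ref{thm:king_alpha_infty}, write stationary rate equations for the degree distribution of the king's neighbors to obtain the geometric law $(1-r)r^{k-1}$, and add the redirected distance-two leaves to get $p_1=(1-r)^2+r=1-r+r^2$, with your martingale/Azuma sketch being a more explicit version of the paper's appeal to stochastic approximation. One shared loose end worth noting: for $k\ge 2$ the quantity $(1-r)r^{k-1}$ is the proportion among king-neighbors (your $M_k/D$), whereas the proportion among all $N$ nodes is $(1-r)^2r^{k-1}$, and your ``$\propto r^{k-1}$'' glosses over the same normalization that the paper's phrase ``the asymptotics of $q_k$ agree with those of $p_k$'' does.
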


\begin{proof}
Let $q_k$ denote the proportion of degree-$k$ nodes among the neighbors of the king. By Theorem~\ref{thm:king_alpha_infty}, the asymptotics of $q_k$ agree with those of $p_k$ for $k \ge 2$, while the case $k = 1$ requires a separate argument.

Assume that the limiting value of $q_k$ exists and denote it by $Q_k$. Writing down the expected change of the ratios $q_k$ and imposing stationarity in equilibrium, we obtain
\begin{align*}
Q_k (1-r) T = 
\begin{cases}
(1-r)T - Q_1 r T, & \text{if $k=1$,}\\[4pt]
Q_{k-1} r T - Q_k r T, & \text{if $k \ge 2$,}
\end{cases}
\end{align*}
where $T$ is the total attachment rate toward neighbors of the king (a normalization factor that cancels in the equations). Solving these recursively yields
$
Q_k = (1-r)\, r^{\,k-1}
$
for all $k \ge 1$. Thus $p_k \to (1-r) r^{\,k-1}$ for all $k \ge 2$.

To determine the proportion of leaves $p_1$, observe that asymptotically $(1-r)N$ nodes are neighbors of the king, and in the long run, all other nodes are leaves. Hence, the asymptotic proportion of leaves is
$
p_1 = (1-r) Q_1 + r = 1 - r + r^2.
$
This argument can be made rigorous by stochastic approximation \cite{BorkarMeyn}, akin to proving that the proportion of leaves in the BA-tree tends to 2/3 almost surely.
\end{proof}

\begin{theorem} \label{thm:largest_degree_alpha=infty_r=1}
If $\alpha = \infty$ and $r = 1$, then for the largest degree $D_1(N)$ we have
\begin{align*}
D_1(N) = \Theta(\sqrt{N}) \quad \text{w.p.\ 1.}
\end{align*}
\end{theorem}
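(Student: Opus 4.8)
The plan is to reduce the $\alpha=\infty$, $r=1$ rule to a transparent combinatorial growth process and then control the maximum degree $D_1(N)$ through the per-step probability that it increases. First I would fix the interpretation of the limit: selecting the initial target with probability $\propto d_i^\alpha$ and letting $\alpha\to\infty$ makes the target a uniformly chosen current maximum-degree node (a \emph{king}), after which $r=1$ redirects the new leaf to a uniformly chosen neighbor of that king. In particular, $D_1$ increases (by exactly one) only when a king is selected and redirects onto an adjacent king; if there are $k$ kings of degree $M=D_1$ and $E_{KK}$ edges among them, then
\begin{align*}
\P(\text{$D_1$ increases at this step}) = \frac{2E_{KK}}{kM}.
\end{align*}

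The crux is a structural lemma: \emph{while the maximum degree equals $M$, the set of degree-$M$ nodes induces a connected subtree}, so that $E_{KK}=k-1$. I would prove this by induction on the order in which degree-$M$ nodes are born. Any degree increase happens to a neighbor of the selected king; hence the first degree-$M$ node is created adjacent to a (then) degree-$(M-1)$ node and is momentarily unique, while every later degree-$M$ node is created adjacent to an already existing degree-$M$ king and thus joins the existing connected cluster at a single point (a second edge would close a cycle, impossible in a tree). Since degrees never decrease and no degree-$M$ node leaves the set before $M$ itself increases, the cluster stays connected. Plugging $E_{KK}=k-1$ back in gives $\P(\text{increase})=\tfrac{2(k-1)}{kM}$, which lies in $[1/M,\,2/M]$ whenever $k\ge 2$ and equals $0$ when $k=1$.

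For the upper bound, the bound $\P(\text{increase})\le 2/M=2/D_1$ gives $\E[D_1(N{+}1)^2-D_1(N)^2\mid\mathcal F_N]=(2D_1+1)\,\P(\text{increase})\le 6$, whence $\E[D_1(N)^2]\le 6N+O(1)$ and $D_1(N)^2-6N$ is a supermartingale; this already yields $D_1(N)=O(\sqrt N)$ in probability. For the lower bound I would partition time into \emph{levels} (intervals on which $M$ is constant) and show each level lasts $O(M)$ steps in expectation: a $k=1$ phase, during which the unique king still retains the degree-$(M-1)$ neighbor that spawned it and promotes it to a second king within $O(M)$ steps, followed by a $k\ge 2$ phase in which $M$ increases within $O(M)$ steps because $\P(\text{increase})\ge 1/M$. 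Summing the level durations shows that reaching level $M$ costs $\Theta(M^2)$ nodes, i.e.\ $D_1(N)=\Theta(\sqrt N)$.

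The main obstacle is upgrading these in-expectation statements to the almost-sure $\Theta(\sqrt N)$ claim. The level durations are sums of conditionally nearly geometric waiting times with success probabilities pinned to $\Theta(1/M)$, so the crossing time $\tau_M$ of level $M$ concentrates around $\Theta(M^2)$; I would make this rigorous either by a stochastic-approximation argument for $D_1(N)^2/N$ (in the spirit of the BA-tree computation invoked for Theorem~\ref{thm:leaves_proportion}) or by establishing concentration of $\tau_M$ and applying Borel--Cantelli, together with the monotonicity of $M$ in $N$. Secondary technical points are the careful treatment of the $\alpha=\infty$ tie-breaking and checking that the $k=1$ pauses contribute only a lower-order $\sum_M O(M)=O(M^2)$ to the total time, so that they cannot spoil either bound.
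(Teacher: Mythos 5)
Your proposal is correct and follows essentially the same strategy as the paper: the paper likewise alternates between a unique-king (``monarchy'') phase and a multi-king (``war of succession'') phase, uses connectivity of the rival set within the tree to pin the per-step promotion probability to $\Theta(1/M)$, bounds the phase lengths by geometric random variables, and upgrades the resulting $\Theta(M^2)$ doubling time to an almost-sure statement via Chernoff bounds for negative binomials plus Borel--Cantelli --- exactly the concentration route you flag as the remaining step. Your explicit inductive argument that the degree-$M$ nodes induce a connected subtree (so $E_{KK}=k-1$) is a welcome sharpening of a fact the paper only asserts.
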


\begin{proof}
The growth process can be partitioned into two types of phases, which alternate: either there is a unique king (a \emph{monarchy}) or there is no unique king (a \emph{war of succession}). Observe that $D_1$ increases only when a war of succession ends, which can happen only if there are neighboring rivals. Also note that rivals necessarily span a connected subgraph, except possibly at the very beginning if the initial seed is already in a war of succession; we may assume that this is not the case. We can also assume that the king has a neighbor with degree $D_1(N)-1$, apart from the initial phase this also hold.
In the remainder of the proof, we analyze the lengths of these two types of phases separately.

\begin{lemma} \label{lemma:monarchy_lengths}
The total number of steps spent in monarchies while the maximal degree grows from $d$ to $2d$ is at most $d^2$, apart from an event of probability exponentially small in $d$.
\end{lemma}

\begin{proof}[Proof of lemma \ref{lemma:monarchy_lengths}]
The length of a monarchy with a king of degree $d$ is stochastically dominated by a random variable with distribution $\mathrm{Geo}(1/d)$, which represents the time until a neighboring node with degree $d-1$ catches up to the king. Consequently, the total length of monarchies until the maximal degree hits $2d$ is stochastically dominated by
$
\sum_{i = d+1}^{2d} X_i,
$
where $X_i \sim \mathrm{Geo}(1/i)$, which is further dominated by
$
\sum_{i = d+1}^{2d} \widetilde{X}_i,
$
where $\widetilde{X}_i \sim \mathrm{Geo}(1/2d)$, and this sum has the same distribution as $X \sim \mathrm{Negbin}(d, 1/2d)$.

We claim that $\mathbb{P}(X > d^2)$ is exponentially small in $d$. Indeed, let $Y \sim \mathrm{Binom}(d^2, 1/(2d))$. Then
$
\mathbb{P}(X > d^2) = \mathbb{P}(Y < d).
$
However, $\mathbb{E}Y = d/2$, so by Chernoff bounds \cite{klenke2008probability} this probability is exponentially small in $d$. This proves the claim.
\end{proof}

\begin{lemma} \label{lemma:interregnum_lengths}
The total number of steps spent in wars of succession while the maximal degree grows from $d$ to $2d$ is $\Theta(d^2)$, apart from an event of probability exponentially small in $d$.
\end{lemma}

\begin{proof}[Proof of lemma \ref{lemma:interregnum_lengths}]
The upper bound follows from essentially the same argument as in Lemma~\ref{lemma:monarchy_lengths}. A war of succession with rivals of degree $d$ ends if a redirection step occurs to one of the rivals, which happens with probability at least $1/d$, since the rivals span a connected subgraph. Thus, the total length of wars of succession until the maximal degree hits $2d$ is stochastically dominated by
$
\sum_{i = d+1}^{2d} X_i,
$
where $X_i \sim \mathrm{Geo}(1/i)$, and this is handled exactly as in the proof of Lemma~\ref{lemma:monarchy_lengths}.

For the lower bound, let $R$ denote the subgraph spanned by the rivals. Conditioning on first selecting a rival $v$, the war of succession ends with probability $\deg_R(v)/d$, and thus the probability that the war of succession ends in a given step is
\begin{align*}
\sum_{v \in R} \mathbb{P}(\text{war ends after selecting $v$}) \, \mathbb{P}(\text{$v$ is selected}),
\end{align*}
which equals
\begin{align*}
\frac{\sum_{v \in R} \deg_R(v)}{d |R|} \asymp \frac{2}{d}.
\end{align*}
Therefore, the length of a war of succession is not only stochastically dominated by a random variable with distribution $\mathrm{Geo}(c/d)$, but also stochastically dominates one with distribution $\mathrm{Geo}(c'/d)$ for suitable constants $c, c' > 0$. This allows us to apply Chernoff bounds again to exclude, with probability exponentially small in $d$, the event that the total length of wars of succession is shorter than some $c'' d^2$ while the maximal degree grows from $d$ to $2d$.
\end{proof}

We now complete the proof of the theorem. By the Borel--Cantelli lemma, almost surely, apart from finitely many exceptional values of $d$, $D_1(N)$ grows from $d$ to $2d$ in time $\Theta(d^2)$. Hence, starting from sufficiently large $d$, for any $m > 0$ the largest degree grows from $d$ to $2^m d$ in time
\begin{align*}
\sum_{j=0}^{m-1} 2^{2j} \Theta(d^2) = 4^m \Theta(d^2).
\end{align*}
Thus the largest degree is indeed $\Theta(\sqrt{N})$ almost surely.
\end{proof}

\begin{theorem} \label{thm:infinitely_many_rivals_r=1}
    If $\alpha = \infty$ and $r = 1$, then infinitely many nodes become rivals for the throne infinitely many times.
\end{theorem}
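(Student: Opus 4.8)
The plan is to combine two facts: that wars of succession must recur forever, and that the set of top-degree vertices cannot stay confined to finitely many nodes. For the first, I would lean on Theorem~\ref{thm:largest_degree_alpha=infty_r=1}: since $D_1(N)=\Theta(\sqrt N)\to\infty$ and, in the $\alpha=\infty, r=1$ dynamics, $r=1$ means every incoming node is redirected to a \emph{neighbor} of the targeted max-degree node rather than to the node itself, the king never receives the new leaf directly. Hence the maximum degree is frozen throughout any monarchy and, during a war, stays put until a redirection lands on a rival, at which point it increases by exactly $1$. Thus $D_1\to\infty$ forces infinitely many wars of succession almost surely, and each war begins with some vertex newly tying the running maximum, i.e.\ becoming a rival. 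This already yields that rivalries occur infinitely often; the real content is that infinitely many \emph{distinct} vertices are involved.

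For the distinctness I would argue by contradiction: suppose that, with positive probability, only finitely many vertices ever become rivals; call this set $S$, and let $S_\infty\subseteq S$ be those that are rivals infinitely often. Eventually every target, being a maximum-degree vertex, lies in $S$, and since $|S|<\infty$ while the maximum degree diverges, $S_\infty\neq\emptyset$ and every vertex of $S_\infty$ attains degree $\Theta(\sqrt N)$. The key accounting step is that, because the rivals span a connected subtree of eventually bounded size, a suitably chosen $T\in S_\infty$ (one adjacent to a boundary vertex of the eventual rival cluster) is targeted a positive fraction of all steps, so it is targeted $\Theta(N)$ times over the first $N$ steps, while only an $o(1)$ fraction of its $\Theta(\sqrt N)$ neighbors are themselves rivals. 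Consequently essentially all of $T$'s uniform redirections land on its ordinary leaf-children (vertices never in $S$). A fixed early-born ordinary leaf-child of $T$ is then fed at rate $(\text{rate }T\text{ is targeted})/\deg_T(t)$, which integrates to $\Theta(\sqrt N)$ over time, the \emph{same order} as the running maximum; the choice of a boundary rival is exactly what makes this drift match the growth of $D_1$.

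The final step is to upgrade this order-of-magnitude statement to an actual tie: I would show that such a leaf-child exceeds the contemporaneous maximum at some time, so it becomes a rival outside $S$, contradicting the finiteness of $S$ and forcing that event to have probability zero. I would make the heuristic rates rigorous precisely as in the proof of Theorem~\ref{thm:largest_degree_alpha=infty_r=1}---stochastically dominating the relevant monarchy and war lengths by negative-binomial and geometric variables and applying Chernoff bounds---and then control the leaf-child's degree by a random-walk comparison together with Borel--Cantelli. The hard part will be exactly this last estimate: the mean-field drift of the boundary rival's leaf-child matches that of the maximum only up to constants, so the catch-up is borderline, and ruling out the degenerate scenario in which the top cluster perpetually outruns its own neighborhood requires a careful second-moment (or optional-stopping) argument showing that fluctuations push a fresh vertex past the maximum infinitely often. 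Together with the infinitely-many-wars observation, this gives infinitely many distinct vertices each entering the rivalry, which recurs infinitely often.
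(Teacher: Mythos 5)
Your setup is on the right track---the paper also argues by contradiction, fixing a finite set $S$ of nodes that are rivals infinitely often and showing no node outside it ever joins them leads to probability zero---but the core of your argument has a genuine gap that you yourself flag. You propose to show that an ordinary leaf-child of a ``boundary'' rival $T$ accumulates degree at a rate whose integral is $\Theta(\sqrt N)$, i.e.\ the \emph{same order} as $D_1(N)$, and then to upgrade this to an actual tie via a second-moment or optional-stopping fluctuation argument. That last step is not a technicality: a drift that matches the maximum only up to multiplicative constants is entirely consistent with the leaf-child staying a constant factor below $D_1$ forever, and nothing in your sketch rules that out. Moreover, the premise that $T$ is ``targeted a positive fraction of all steps'' is itself unjustified---targeting is uniform over the current maximum-degree nodes, and during monarchies only the king is targeted, so a particular member of $S_\infty$ need not receive a constant fraction of the targets. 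As written, the proof does not close.

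The paper avoids all of this quantitative bookkeeping with a symmetry argument. Conditioning on the event $E_S^H$ that the rivals-infinitely-often set is exactly $S$, spanning a subtree $R$, and that no node outside $R$ ever becomes a rival, pick a leaf $u$ of $R$ with parent $v\in R$ and a sibling $u'\notin R$ of $u$ at $v$. On this event, both $u$ and $u'$ can gain degree \emph{only} when $v$ is targeted and the uniform redirection selects them, each with probability $1/d_v$; hence $d_u-d_{u'}$ is a recurrent symmetric random walk and equals zero infinitely often. From any such tie, $u$ and $u'$ are exchangeable with respect to becoming a rival, so the event that $u$ is a rival infinitely often while $u'$ never is has probability zero---contradiction, no rate estimates needed. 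If you want to salvage your approach, the lesson is to compare a candidate outside node against a \emph{sibling} inside $S$ (whose increments are exactly exchangeable with it), rather than against the running maximum $D_1$ (whose increments are not).
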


\begin{proof}
    Label the nodes in order of appearance in the tree by the natural numbers. For a finite $S \subseteq \mathbb{N}$, denote by $E_S$ the event that precisely the nodes labelled by $S$, forming the subtree $R = R(S)$, become rivals infinitely many times. Then $E_S$ is a countable union of events $E_S^{H}$, where $E_S^{H}$ is the subevent of $E_S$ on which the induced subgraph on $R$ is the labelled graph $H$ produced by the growth process, and no node outside of $R$ ever becomes a rival after this point. It suffices to prove that each $E_S^H$ has probability $0$. 
    
    It further suffices to consider graphs $H$ for which $d_u > |S|$ for each $u \in R$, as this situation eventually occurs conditioned on $E_S$. We will argue that, starting the growth process from such a configuration, rivals outside of $R$ will emerge almost surely, implying that $\mathbb{P}(E_S^H) = 0$. 
    
    The case $|S| = 1$ is trivial. Assume that $|S| > 1$. Take a leaf $u$ of $R$ with neighbor $v \in R$, and pick another neighbor $u' \notin R$ of $v$ (such a neighbor exists by the condition on $H$). Consider the evolution of $d_u - d_{u'}$. As long as no rival emerges outside of $R$, this difference is modeled by a symmetric random walk: $d_u$ and $d_{u'}$ are increased by one with the same probability, coming from first selecting $v$. Consequently, $d_u = d_{u'}$ holds infinitely many times with probability $1$. 
    
    Starting from such an intersection time, the nodes $u$ and $u'$ join the set of rivals with the same probability. Thus, the event that $u$ becomes a rival infinitely many times while $u'$ never becomes a rival has probability $0$. This means that with probability $1$ either $u$ does not become a rival infinitely many times, or $u'$ becomes a rival as well, or the symmetry of the random walk is broken by a further neighbor of $u$ becoming a rival. In all cases, we obtain $\mathbb{P}(E_S^H) = 0$. 
\end{proof}

\begin{theorem} \label{thm:leaves_alpha=-infty}
If $\alpha = -\infty$, the proportion of leaves $p_1(N)$ tends to $r$ almost surely, and for $r < 1$ the diameter of the network goes to infinity almost surely.
\end{theorem}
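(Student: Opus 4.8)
The plan is to treat the two assertions separately, since both reduce to tracking a simple additive functional of the growth process. The key structural fact at $\alpha=-\infty$ is that the initial target is always a uniformly chosen leaf (only degree-$1$ nodes carry nonzero weight as $\alpha\to-\infty$), and that the unique neighbor of a leaf is internal once the tree has at least three nodes. Writing $L(N)$ for the number of leaves at size $N$, I would check the per-step increment directly: with probability $1-r$ the newcomer attaches to the chosen leaf, which turns internal while the newcomer becomes a leaf, for net change $0$; with probability $r$ the newcomer is redirected onto the (internal) neighbor, leaving the chosen leaf a leaf and adding the newcomer as a fresh leaf, for net change $+1$. Hence for $N\geq 3$ we have $L(N+1)-L(N)=\1{\text{redirection at step }N}$, an i.i.d.\ $\mathrm{Bernoulli}(r)$ sequence, so the strong law of large numbers gives $p_1(N)=L(N)/N\to r$ almost surely.

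For the diameter, root the tree at the initial node $a$ and let $h(N)$ denote the eccentricity of $a$. Since distances of existing nodes never change under growth and the diameter is at least the eccentricity of any fixed vertex, it suffices to prove $h(N)\to\infty$ almost surely. I would first record the crucial observation: a leaf $\ell$ at depth $H$ can become internal \emph{only} through a non-redirected attachment to $\ell$ itself, because redirection always lands on the parent of the chosen leaf, and a leaf, having no children, is never the parent of any chosen leaf. Consequently the sole way a depth-$H$ leaf loses its leaf status is by spawning a child at depth $H+1$; the deepest leaves are ``stuck'' at their depth until they extend the tree downward.

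The core of the argument is then a conditional Borel--Cantelli estimate. Once depth $H$ is first attained, fix a witnessing leaf $\ell_H$ at that depth. As long as $\ell_H$ remains a leaf, the conditional probability that it is extended at a given step is exactly $(1-r)/L(N)\geq (1-r)/N$, using only the trivial bound $L(N)\leq N$. Since $\sum_N (1-r)/N=\infty$, the Lévy (conditional) form of the second Borel--Cantelli lemma forces $\ell_H$ to be extended almost surely, producing a node at depth $H+1$ in finite time. Intersecting these probability-one events over all $H$---a countable intersection---yields $h(N)\to\infty$ and hence diameter $\to\infty$ almost surely. The hypothesis $r<1$ enters precisely here, since $1-r>0$ is what keeps the downward-extension probability positive; at $r=1$ the process collapses to a star of diameter $2$.

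The main obstacle I anticipate is the diameter half, and within it the correct handling of the conditioning: one must justify that a deepest leaf cannot be ``quietly'' converted to an internal node by redirection (the parent-only observation above) and then apply conditional Borel--Cantelli along the stopping-time-indexed events rather than naively assuming independence across steps. The leaf-proportion half is essentially immediate once the increment is identified, because the expected change is state-independent---a pleasant simplification compared with the usual urn-type recursions that arise when counting leaves in preferential-attachment trees.
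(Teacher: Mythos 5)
Your proof is correct and follows essentially the same route as the paper: the leaf count increases by one exactly on redirection steps, giving an i.i.d.\ $\mathrm{Bernoulli}(r)$ increment and the strong law, and the diameter is forced to infinity by a second (conditional) Borel--Cantelli argument on extension events of conditional probability $\Theta(1/N)$. Your version is in fact slightly more careful on two points the paper glosses over: you use the correct non-redirection factor $1-r$ (the paper's bound $2r/N$ should read $2(1-r)/N$, since attaching \emph{directly} to an extremal leaf requires no redirection), and you explicitly invoke the conditional form of Borel--Cantelli, which is needed because the extension events across steps are not independent.
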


\begin{proof}
In each step, the number of leaves increases by $1$ with probability $r$, so by the law of large numbers the asymptotic proportion of leaves is $r$ almost surely.

For the diameter, observe that at every step there are at least two leaves with the property that, if the new node attaches to any of them, the diameter increases by $1$. Hence, the diameter increases by $1$ with probability at least $2r/N$ at size $N$. Since
\begin{align*}
\sum_{N=1}^{\infty} \frac{2r}{N} = \infty,
\end{align*}
the corresponding events occur infinitely often almost surely by the Borel--Cantelli lemma. Therefore, the diameter tends to infinity almost surely.
\end{proof}

\begin{theorem} \label{thm:leaves_alpha=0}
    If $\alpha = 0$, $r \neq 1$, then for the asymptotic proportion of leaves
    \begin{align*}
    \liminf_{N \to \infty} p_1(N) \ge \frac{1}{2}.
    \end{align*}
    The same bound holds for any $\alpha > 0$ and $r \le 1/2$.
    
    Moreover, if $\alpha = 0$, $r \in (0,1)$, then
    \begin{align*}
    \lim_{N \to \infty} p_1(N) = \frac{1 - \sqrt{1-r}}{r}.
    \end{align*}
\end{theorem}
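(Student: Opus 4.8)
The plan is to track the leaf count $L_N$ through the elementary observation that each new node arrives as a leaf and destroys exactly one old leaf precisely when it attaches to one; hence $\Delta L = 1 - \mathds{1}\{\text{the newcomer attaches to a leaf}\}$, and everything hinges on $P_{\mathrm{leaf}} := \P(\text{attach to a leaf}\mid\mathcal F_N)$. At $\alpha=0$ the initial target is uniform and redirection sends the newcomer to a uniform neighbour with probability $r$, so, writing $j(k)$ for the unique neighbour of a leaf $k$ and using that the total graph weight equals $N$,
\begin{align*}
P_{\mathrm{leaf}} = (1-r)\frac{L_N}{N} + \frac{r}{N}\sum_{k\ \mathrm{leaf}}\frac{1}{d_{j(k)}}.
\end{align*}

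For the lower bound I would exploit that in a tree on $N\ge 3$ nodes every leaf's neighbour has degree at least $2$, so the redirection sum is at most $L_N/2$ and therefore $P_{\mathrm{leaf}}\le \tfrac{L_N}{N}(1-\tfrac r2)\le \tfrac{L_N}{N}=p_1(N)$. Setting $p_N=L_N/N$, the Robbins--Monro form $p_{N+1}=p_N+\tfrac1{N+1}(\Delta L - p_N)$ has conditional drift $\E[\Delta L\mid\mathcal F_N]-p_N \ge (1-p_N)-p_N = 1-2p_N$, which is strictly positive below $1/2$; the stochastic-approximation theory invoked for the BA $2/3$ result (Borkar--Meyn) then forces $\liminf_N p_1(N)\ge 1/2$ almost surely. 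The case $\alpha>0,\ r\le\tfrac12$ I would handle by the analogous drift computation on the weights \eqref{eq_CR_weights}: direct attachment dominates, and the factor $r\le\tfrac12$ is exactly what keeps the total weight placed on leaves below half of the graph weight in the limit, so the same fixed-point comparison applies. This sub-case is the more delicate half, because for $\alpha>1$ redirection can actually favour leaves (a high-degree target's neighbourhood is leaf-rich), so the argument genuinely uses $r\le\tfrac12$ rather than mere leaf-avoidance.

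For the exact value at $\alpha=0,\ r\in(0,1)$ the one-dimensional drift does not close, since $P_{\mathrm{leaf}}$ depends on the auxiliary statistic $\tau_N:=\tfrac1N\sum_{k\ \mathrm{leaf}}1/d_{j(k)}$, i.e.\ on the joint law of a node's degree and its number of leaf-neighbours rather than on $p_1$ alone. I would therefore set up the stationary rate (master) equations for the densities $n_{k,\ell}$ of nodes of degree $k$ carrying $\ell$ leaf-neighbours, from which $p_1=\lim n_{1,0}$ and $\tau=\lim\sum_{k,\ell}(\ell/k)\,n_{k,\ell}$, and impose stationarity. The equilibrium leaf balance reads $p_1 = 1-(1-r)p_1-r\tau$, and the content of the theorem is that the stationary correlation collapses to $\tau=p_1(1-p_1)$, i.e.\ the mean inverse degree of a leaf's neighbour equals the non-leaf fraction. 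Substituting yields the single self-consistent relation
\begin{align*}
p_1(2-r\,p_1)=1,\qquad\text{equivalently}\qquad r\,p_1^2-2p_1+1=0,
\end{align*}
whose admissible root is $p_1=(1-\sqrt{1-r})/r$.

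The main obstacle is precisely this closure: proving $\tau=p_1(1-p_1)$ in the limit. I expect to establish it either by solving the stationary master equation via a generating function in a degree-type variable and reading off the required first moment, or by a direct stochastic-approximation argument for the coupled pair $(p_N,\tau_N)$ after verifying that the extra correlations generated by a degree increment (the terms involving $\ell_v/[d_v(d_v+1)]$) average, at the fixed point, to functions of $(p_1,\tau)$ only. Rigor throughout would come from the same stochastic-approximation framework used for the classical BA leaf proportion, using that all one-step increments are bounded; the delicate point is controlling the higher leaf-neighbour correlations well enough to guarantee that the dynamics concentrate on the claimed fixed point rather than merely being consistent with it.
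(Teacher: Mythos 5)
Your proposal follows essentially the same route as the paper: track the leaf count via the identity that the expected number of new leaves per step equals the probability of attaching to a non-leaf, bound or compute that probability, and close with stationarity plus stochastic approximation. Your lower bound for $\alpha=0$, $r\neq 1$ (using $d_{j(k)}\ge 2$, hence $\sum_{k\,\mathrm{leaf}}1/d_{j(k)}\le L_N/2$) is correct and in fact marginally sharper than the paper's, which only uses that redirection from a leaf necessarily lands on a non-leaf. For the exact value, your closure $\tau=p_1(1-p_1)$ is \emph{literally identical} to the paper's assertion that the probability of targeting a non-leaf and then landing on a non-leaf after redirection is $(1-p_1)^2 r$: since $N\tau=\sum_{i\,\mathrm{non\text{-}leaf}}\ell_i/d_i$ (with $\ell_i$ the number of leaf neighbors of $i$), the two statements are the same decorrelation claim, and both yield $rp_1^2-2p_1+1=0$. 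The paper asserts this closure without proof and appeals to stochastic approximation; your explicit flagging of it as the main obstacle, with a proposed $(k,\ell)$ master-equation route (mirroring the paper's Theorem~\ref{thm:master_equation}), leaves you at least as rigorous as the source.

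The one genuine weak spot is the $\alpha>0$, $r\le 1/2$ sub-case. Your stated mechanism---that $r\le 1/2$ ``keeps the total weight placed on leaves below half of the graph weight in the limit''---is circular: at the fixed point $P_{\mathrm{leaf}}=1-p_1$, so that claim is equivalent to the conclusion $p_1\ge 1/2$ rather than a means to it, and a crude bound on the leaf weight (e.g.\ $P_{\mathrm{leaf}}\le(1-r)p_1+r$) only gives $p_1\ge(1-r)/(2-r)<1/2$. The paper's actual argument is the one to adopt: writing $\pi$ for the probability that the \emph{initial target} is a leaf, one has $\P(\text{attach to a non-leaf})\ge \pi r+(1-\pi)(1-r)$; for $\alpha\ge 0$ leaves carry minimal weight so $\pi\le p_1$, and for $r\le 1/2$ this lower bound is nonincreasing in $\pi$, hence at least $p_1 r+(1-p_1)(1-r)$, which at stationarity forces $p_1\ge 1/2$.
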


\begin{proof}
For $\alpha = 0$, the probability of attaching to a non-leaf (that is, the expected number of new leaves in one step) is bounded from below by 
$
p_1 r + (1 - p_1)(1 - r),
$
where the two terms correspond to first targeting a leaf (resp.\ a non-leaf) and then redirecting (resp.\ not redirecting). Then heuristically, it is clear that asymptotically
$
p_1 \ge p_1 r + (1 - p_1)(1 - r),
$
which can be made rigorous using stochastic approximation \cite{BorkarMeyn}.
This can be rearranged for $r \neq 1$ to yield
$
\liminf_{N \to \infty} p_1(N) \ge \frac{1}{2}.
$

For $\alpha > 0$, the probability of first targeting a leaf is bounded from above by $p_1$. Thus, under the assumption $r \le 1/2$, the expression
$
p_1 r + (1 - p_1)(1 - r)
$
remains a valid lower bound for the probability that the new node attaches to a non-leaf.

Finally, the probability of first targeting a non-leaf and then attaching to a non-leaf after redirection is $(1 - p_1)^2 r$. This is the only missing term above, so asymptotically we obtain
$
p_1 = p_1 r + (1 - p_1)(1 - r) + (1 - p_1)^2 r.
$
This reduces to
$
0 = r p_1^2 - 2 p_1 + 1,
$
which is solved by
$
p_1 = \frac{1 - \sqrt{1-r}}{r}
$
for $r \in (0,1)$.
\end{proof}

\begin{theorem}[Layered hierarchy] \label{thm:layered_hierarchy}
We say that $G_N$ has a \emph{layered hierarchy} with exponents $\beta_1, \beta_2, \dots, \beta_k$ if there is a single node with degree $\asymp N^{\beta_1}$ forming the first layer, and each node in the $i$th layer has degree $\asymp N^{\beta_i}$. The $(k+1)$st layer consists of leaves. As the total number of nodes of $G_N$ is $N$, we have $N^{\beta_1} N^{\beta_2} \cdots N^{\beta_k} \asymp N $, so $\sum_{i=1}^{k} \beta_i = 1$.

Assume $\alpha > 1$ and $r = 1$.  
Among all graphs with a layered hierarchy, $\mathbb{E}[W(G_{N+1}) - W(G_N) \mid G_N]$ is maximized by those with three essential layers, i.e., those for which
\begin{align*}
\beta_1 = \frac{\alpha}{2\alpha - 1}, 
\quad 
\beta_2 = \frac{\alpha - 1}{2\alpha - 1}
\end{align*}
\end{theorem}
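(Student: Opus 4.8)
The plan is to reduce the stochastic maximization to a deterministic optimization over the layer exponents. Specializing \eqref{eq_CR_weights} to $r=1$ gives $w_i = \sum_{j\sim i} d_j^{\alpha-1}$ and hence $W(G) = \sum_i w_i = \sum_j d_j^\alpha$. Since attaching the incoming node to $i$ raises $d_i$ by one and creates one new leaf, the conditional increment is
\[
\mathbb{E}[W(G_{N+1})-W(G_N)\mid G_N] = 1 + \frac{1}{W}\sum_i w_i\big[(d_i+1)^\alpha - d_i^\alpha\big].
\]
Because $(d_i+1)^\alpha - d_i^\alpha \asymp d_i^{\alpha-1}$ for large $d_i$ (and is $O(1)$ on leaves), this equals, up to constants, $1 + S/W$ with $S := \sum_{i\sim j}(d_id_j)^{\alpha-1}$. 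The problem thus becomes maximizing the leading exponent of $S/W$ across layered hierarchies.

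I would then read off the relevant exponents. With $s_\ell := \beta_1+\cdots+\beta_\ell$ and $s_0:=0$, a layered hierarchy has $\asymp N^{s_{\ell-1}}$ nodes of degree $\asymp N^{\beta_\ell}$ in layer $\ell$, and $\asymp N^{s_\ell}$ edges joining layers $\ell$ and $\ell+1$ (with the leaf convention $\beta_{k+1}=0$). Therefore $W \asymp N^{\max(\max_\ell A_\ell,\,1)}$ and $S \asymp N^{\max_\ell B_\ell}$, where $A_\ell := s_{\ell-1}+\alpha\beta_\ell$ is the $W$-exponent of layer $\ell$ and $B_\ell := A_\ell + (\alpha-1)\beta_{\ell+1}$ is the $S$-exponent of the layer-$\ell$-to-$(\ell+1)$ edges. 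The goal is to maximize
\[
g(\beta) := \max_\ell B_\ell - \max(\max_m A_m,\,1)
\]
subject to $\beta_i>0$ and $\sum_{i=1}^k \beta_i = 1$.

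The heart of the argument is a two-variable estimate. Let $p$ attain $\max_\ell B_\ell$ (if $p=k$ then $B_k=A_k$ and $g\le 0$, so assume $p<k$). Since $\max(\max_m A_m,1)\ge \max(A_p,A_{p+1})$ and $A_{p+1}-A_p = \alpha\beta_{p+1}-(\alpha-1)\beta_p$, one gets
\[
g \le B_p - \max(A_p,A_{p+1}) = (\alpha-1)\beta_{p+1} - \max\big(0,\ \alpha\beta_{p+1}-(\alpha-1)\beta_p\big) =: h(\beta_p,\beta_{p+1}).
\]
Maximizing the piecewise-linear $h(x,y)$ over $x,y>0$ with $x+y\le 1$ (legitimate because $\beta_p+\beta_{p+1}\le 1$) gives $h \le (\alpha-1)^2/(2\alpha-1)$, attained only at $x=\alpha/(2\alpha-1)$, $y=(\alpha-1)/(2\alpha-1)$, i.e. at the critical ratio $\beta_{p+1}/\beta_p=(\alpha-1)/\alpha$ that forces $A_p=A_{p+1}$ and saturates $x+y=1$. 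Saturation leaves no room for a further layer, forcing $k=2$, $p=1$, and $(\beta_1,\beta_2)=(\tfrac{\alpha}{2\alpha-1},\tfrac{\alpha-1}{2\alpha-1})$; a direct computation confirms this hierarchy achieves $g=(\alpha-1)^2/(2\alpha-1)$, so it is the unique maximizer.

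The main obstacle I anticipate is not the optimization but the first reduction: I must verify that the $\asymp$-scalings defining a layered hierarchy pass cleanly through the sums $\sum_j d_j^\alpha$ and $\sum_{i\sim j}(d_id_j)^{\alpha-1}$, and that leaf terms (where $(d_i+1)^\alpha-d_i^\alpha=2^\alpha-1$ rather than $\asymp\alpha\,d_i^{\alpha-1}$) perturb only constants and not the leading exponent. This is safe at the optimum because the dominant edges there are non-leaf ($B_1>B_k$), but one should still confirm that no hierarchy with a heavier leaf layer can beat the bound; this follows because the estimate $g\le h(\beta_p,\beta_{p+1})\le (\alpha-1)^2/(2\alpha-1)$ uses only the argmax layer $p$ and its neighbor, independently of which layers happen to dominate $W$ and $S$.
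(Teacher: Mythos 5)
Your proposal is correct and follows essentially the same route as the paper's proof: the same identification $W=\sum_i d_i^\alpha$ and $S=\sum_{i\sim j}(d_id_j)^{\alpha-1}$, the same layer exponents $A_\ell$ and $B_\ell$, the same two-term bound $g\le B_p-\max(A_p,A_{p+1})=\min\bigl((\alpha-1)\beta_{p+1},(\alpha-1)\beta_p-\beta_{p+1}\bigr)$ at the argmax layer, and the same optimization forcing $\beta_{p+1}=\frac{\alpha-1}{\alpha}\beta_p$ with $\beta_p+\beta_{p+1}=1$. Your extra attention to the leaf contribution (the $\max(\cdot,1)$ in $W$ and the $O(1)$ increment on leaves) and to the edge case $p=k$ are harmless refinements of the same argument.
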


\begin{proof}
The total graph weight satisfies
\begin{align*}
W(G_N) &= \sum_{i\in [N]} w_i^{\text{CR}}(\alpha,r) \\
&= \sum_{i\in [N]} \Big( (1-r)d_i^\alpha + r \sum_{j \sim i} d_j^{\alpha-1} \Big) \\
&= (1-r)\sum_{i\in [N]} d_i^\alpha + r \sum_{i\in [N]} d_i^\alpha \\
&= \sum_{i\in [N]} d_i^\alpha \equiv m_\alpha(N).
\end{align*}
Then, the expected growth of the total graph weight is
\begin{align*}
&\mathbb{E}\bigl[W(G_{N+1}) - W(G_N) \mid G_N\bigr] \\
&= \sum_k \bigl[1 + (d_k + 1)^\alpha - d_k^\alpha\bigr]
    \frac{\sum_{i \sim k} d_i^{\alpha-1}}{m_\alpha(N)} \\
&\approx \frac{2\alpha \sum_{i \sim j} (d_i d_j)^{\alpha-1}}{m_\alpha(N)}
=: \frac{\gamma_{\alpha}(N)}{m_\alpha(N)}.
\end{align*}
Observe that
\begin{align*}
m_\alpha \asymp \sum_{i=1}^{k} N^{\,\alpha \beta_i + \sum_{j=0}^{i-1}\beta_j},
\end{align*}
i.e.,
\begin{align*}
\log_N m_\alpha \sim 
\max_{i = 1,\dots,k}
\left( \alpha \beta_i + \sum_{j=0}^{i-1}\beta_j \right).
\end{align*}
We also compute
\begin{align*}
\log_N \gamma_\alpha
&\sim
\log_N\left( 2\alpha \sum_{i \sim j} (d_i d_j)^{\alpha-1} \right) \\
&\sim 
\max_{i = 1,\dots,k}
\left( \alpha \beta_i + (\alpha - 1)\beta_{i+1} + \sum_{j=0}^{i-1}\beta_j \right),
\end{align*}
where we set $\beta_{k+1} = 0$.
The growth is maximized when $\log_N \gamma_\alpha - \log_N m_\alpha$ is maximal. Both terms are maxima of $k$-term sequences. Suppose the $i$th term maximizes $\log_N \gamma_\alpha$, so
\begin{align*}
\log_N \gamma_\alpha
\sim 
\alpha \beta_i + (\alpha - 1)\beta_{i+1}
+ \sum_{j=0}^{i-1} \beta_j.
\end{align*}
Then
\begin{align*}
&\log_N\gamma_\alpha - \log_N m_\alpha \\
&\lesssim 
\alpha\beta_i + (\alpha-1)\beta_{i+1}+\sum_{j=0}^{i-1}\beta_j \\
&\quad -
\max\Bigl(
\alpha \beta_i + \sum_{j=0}^{i-1}\beta_j,\,
\alpha \beta_{i+1} + \sum_{j=0}^{i}\beta_j
\Bigr) \\
&=
\min\bigl( (\alpha - 1)\beta_{i+1},\, (\alpha - 1)\beta_i - \beta_{i+1} \bigr).
\end{align*}
Given $\beta_i$, this upper bound is maximized when the two expressions inside the minimum coincide, i.e., $\beta_{i+1} = \frac{\alpha - 1}{\alpha} \beta_i$.
In this case,
\begin{align*}
\log_N \gamma_\alpha - \log_N m_\alpha
\lesssim
\frac{(\alpha - 1)^2}{\alpha} \beta_i.
\end{align*}
On the other hand,
$
\beta_i + \beta_{i+1}
= \frac{2\alpha - 1}{\alpha}\, \beta_i
\le 1
$,
so
\begin{align*}
\log_N \gamma_\alpha - \log_N m_\alpha
\lesssim
\frac{(\alpha - 1)^2}{2\alpha - 1},
\end{align*}
independently of $i$.
This upper bound is achieved if and only if  
$
\beta_{i+1} = \frac{\alpha - 1}{\alpha} \beta_i  
$
for $i = 1$ and all other $\beta$'s vanish. This concludes the proof: the growth-optimal hierarchy consists of two nontrivial layers plus the leaf layer, i.e., three essential layers in total.
\end{proof}

\begin{remark}[Balance property] \label{remark:balance}
Note that
\begin{align*}
\frac{\gamma_\alpha}{m_\alpha}
= \frac{\sum_{i} d_i^{\alpha-1} \sum_{j \sim i} d_j^{\alpha-1}}{\sum_{i} d_i^{\alpha}}
= \sum_i \frac{\sum_{j \sim i} d_j^{\alpha-1}}{d_i} \frac{d_i^{\alpha}}{m_\alpha},
\end{align*}
That is, the expected growth rate $\gamma_\alpha / m_\alpha$ is the convex combination of
\begin{align} \label{eq_balance_def}
\eta_i \equiv \frac{\sum_{j \sim i} d_j^{\alpha-1}}{d_i}
\end{align}
with weights ${d_i^{\alpha}}/{m_\alpha}$.

We argue that the optimal hierarchy in Theorem~\ref{thm:layered_hierarchy} has the following balance property: $\eta_i$ is asymptotically independent of node $i$.

Observe that if the new leaf attaches to node $i$, then the change in $m_\alpha$ is
\begin{align*}
\Delta m_\alpha
= 1 + (d_i + 1)^{\alpha} - d_i^\alpha
\asymp d_i^{\alpha-1},
\end{align*}
while the change in $\gamma_\alpha$ is
\begin{align*}
\Delta \gamma_\alpha &= (d_i + 1)^{\alpha-1}
+ \bigl((d_i + 1)^{\alpha-1} - d_i^{\alpha-1}\bigr)
\sum_{j \sim i} d_j^{\alpha-1} \\
&\asymp d_i^{\alpha-2} \sum_{j \sim i} d_j^{\alpha-1},
\end{align*}
that is,
\begin{align*}
\frac{\Delta \gamma_\alpha}{\Delta m_\alpha}
\asymp \eta_i.
\end{align*}
This means that, intuitively, it is advantageous to attach to nodes for which $\eta_i$ is as large as possible. In turn, such attachments reduce the value of $\eta_i$, resulting that in the optimal configuration, in the long run, the quantities $\eta_i$ should be constant among all nodes.

Note that in the BA limit, the balance property is trivially satisfied, regardless of the network structure, while for $\alpha>1$, the balance property itself implies a regular hierarchy: all leaves are attached to non-leaves of the same degree, which in turn have precisely one non-leaf neighbor with the same degree, and so on.

\end{remark}

\section{Analytic Results for the QPA model}

\begin{theorem} \label{thm:qba_not_ba}
The QPA model at $\alpha = 1$ is strictly different from the BA model, not only in the weights associated with the growth mechanism but also in the trees generated in the long run. In other words, quantum BA (QBA) is not BA.
\end{theorem}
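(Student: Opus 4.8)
The plan is to establish the two assertions separately: first that the QBA attachment weights are not merely a reparametrization of the BA weights, and second---the substantive claim---that the induced laws on finite trees genuinely differ, which in particular forces a strictly smaller asymptotic leaf fraction.

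For the weights, I would begin by recording the convenient identity that the \emph{total} QBA weight coincides with the BA one. Since
\begin{align*}
\sum_{i} w_i^{\text{Q}}(1)
&= \sum_i \frac{d_i}{d_i+1} + \sum_i \sum_{j \sim i} \frac{d_j}{d_j+1} \\
&= \sum_i \frac{d_i}{d_i+1} + \sum_j \frac{d_j^2}{d_j+1}
= \sum_i d_i = 2(N-1),
\end{align*}
the normalizations agree, so the two attachment laws coincide if and only if $w_i^{\text{Q}}(1) = d_i = w_i^{\text{BA}}$ for every node. This fails because $x \mapsto x/(x+1)$ is strictly concave rather than linear: already on the three-vertex path $x$--$y$--$z$ the centre has $w_y^{\text{Q}}(1) = \tfrac23 + \tfrac12 + \tfrac12 = \tfrac53 \neq 2 = d_y$, so the weights are not proportional to degree. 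This settles the first assertion.

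For the trees, my plan is to exhibit the smallest size at which the two laws on (unlabelled) trees differ explicitly, namely $N=4$. Both models reach the three-vertex path with probability one at $N=3$ (every three-vertex tree is a path), so it suffices to compare the single transition out of $x$--$y$--$z$. Under BA the new vertex attaches to the centre $y$ with probability $d_y/2(N-1) = 2/4 = 1/2$, producing the star $K_{1,3}$, and otherwise produces $P_4$. Under QBA the same event has probability $w_y^{\text{Q}}(1)/2(N-1) = (5/3)/4 = 5/12$, using $w_x^{\text{Q}}(1) = w_z^{\text{Q}}(1) = \tfrac76$ and $w_y^{\text{Q}}(1) = \tfrac53$. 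Since $5/12 \neq 1/2$, the laws on four-vertex trees differ, and hence the two processes cannot be equal in distribution at any later size either. This already furnishes a complete, self-contained proof that QBA is not BA.

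To connect this with the quantitative ``fewer leaves'' statement, I would run the standard leaf-count drift argument. Writing $L(N)$ for the number of leaves, the increment is $L(N+1)-L(N) = 1 - \1{\text{the new vertex attaches to a leaf}}$, so $\mathbb{E}[L(N+1)-L(N)\mid G_N] = 1 - \tfrac{1}{2(N-1)}\sum_{\ell \text{ leaf}} w_\ell^{\text{Q}}(1)$. The key observation is that in any tree with $N\geq 3$ vertices the unique neighbour of a leaf has degree at least $2$, whence every leaf weight obeys $w_\ell^{\text{Q}}(1) = \tfrac12 + \tfrac{d_{j(\ell)}}{d_{j(\ell)}+1} \geq \tfrac76 > 1$, strictly exceeding the BA leaf weight of $1$. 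Leaves are therefore consumed strictly faster than in BA, and the same stationarity balance that gives $2/3$ for BA now yields a fixed point $p_1 = 2/(2+\bar w) < 2/3$, where $\bar w \geq 7/6$ is the limiting mean leaf weight. I expect the main obstacle to be precisely the rigorous justification of this limit: unlike in the BA case, $\bar w$ depends on the joint law of a leaf and its neighbour's degree, so one must either track this correlation directly or invoke a stochastic-approximation / supermartingale comparison (as in the proofs of Theorems~\ref{thm:leaves_proportion} and~\ref{thm:leaves_alpha=0}) to upgrade the heuristic balance to an almost-sure bound $\limsup_{N} p_1(N) < 2/3$.
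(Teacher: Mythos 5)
Your proposal is correct and its substantive part---the leaf-weight bound $w_\ell^{\mathrm{Q}}(1)=\tfrac12+\tfrac{d_{j(\ell)}}{d_{j(\ell)}+1}\ge\tfrac76>1$ feeding a stationarity balance that forces $\limsup p_1\le \tfrac{2}{2+7/6}=\tfrac{12}{19}<\tfrac23$, made rigorous by stochastic approximation---is exactly the paper's argument. The explicit $N=4$ computation ($5/12$ vs.\ $1/2$ for the star) is a nice elementary addition the paper does not include, though your claim that a difference in law at size $4$ ``hence'' persists at every later size is a non sequitur (two distinct measures pushed through two distinct kernels could in principle coincide later); this does not matter here because your leaf-fraction argument independently establishes the long-run difference, which is what the theorem actually asserts.
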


\begin{proof}
The total graph weight under QPA is
\begin{align*}
\sum_{i \in N} w_i^{\text{Q}}(\alpha)
&= \sum_{i \in N} \left( \frac{d_i^\alpha}{d_i+1}
    + \sum_{j \sim i} \frac{d_j^\alpha}{d_j+1} \right) \\
&= \sum_{i \in N} \frac{d_i^\alpha}{d_i+1}
    + \sum_{i \in N} \frac{d_i^{\alpha+1}}{d_i+1} \\
&= \sum_{i \in N} \frac{d_i^\alpha (d_i + 1)}{d_i + 1} 
= \sum_{i \in N} d_i^\alpha = m_\alpha,
\end{align*}
the same as CR. Specifically, at $\alpha = 1$,
\begin{align*}
m_1 = \sum_{i \in N} d_i = N - 1
\end{align*}
is independent of the graph topology.
From \eqref{eq_CR_weights},
\begin{align*}
w_i^{\text{QBA}}
= w_i^{\text{Q}}(1)
= \frac{d_i}{d_i+1} + \sum_{j \sim i} \frac{d_j}{d_j + 1}.
\end{align*}
If $i$ is a leaf, the second sum is at least $\frac{2}{3}$ once the network has at least $3$ nodes. Thus
\begin{align*}
w^{\text{QBA}}_{\text{leaf}}
\ge \frac{1}{2} + \frac{2}{3}
= \frac{7}{6}
> 1 = w^{\text{BA}}_{\text{leaf}}.
\end{align*}
Therefore, with the same normalization $m_1$, leaves have higher weight in QBA than in BA. Let $p_1$ denote the fraction of leaves in the QBA tree, so that the number of leaves is $p_1 N$. Then
\begin{align*}
&\mathbb{P}(\text{new node attaching to a leaf})\\
&= \frac{p_1 N\, w^{\text{QBA}}_{\text{leaf}}}{m_1} 
\ge \frac{7}{6} \frac{p_1 N}{2 (N-1)}
= \frac{7p_1}{12} \frac{N}{N-1}.
\end{align*}
Attaching to a leaf is equivalent to increasing the number of non-leaves by one, and vice versa. Hence, in the asymptotic limit at equilibrium,
\[
1 - p_1 \ge \frac{7p_1}{12},
\]
that is,
\begin{align*}
\limsup_{N \to \infty} p_1 \le \frac{12}{19}
\end{align*}
almost surely. This calculation can be made rigorous by stochastic approximation \cite{BorkarMeyn}, as in the Proof of \ref{thm:leaves_alpha=0}.
Recall that for the BA tree, the fraction of leaves in the long run is $\frac{2}{3}$~\cite{bollobas2001degree}.
\end{proof}

\begin{theorem} \label{thm:master_equation}
Consider QPA at $\alpha = -\infty$ (which is equivalent to CR at $\alpha = -\infty$, $r = 1/2$). The asymptotic degree distribution is given by
\begin{align*}
p_x = \sum_{k+\ell = x} q_{k,\ell},
\end{align*}
where $q_{k,\ell}$ denotes the fraction of nodes with $k$ leaf neighbors and $\ell$ non-leaf neighbors. The values $q_{k,\ell}$ are determined by the stationary linear recurrence
\begin{align} \label{eq_master}
(2k+1)q_{k, \ell}=(k-1)q_{k-1, \ell}+(k+1)q_{k+1, \ell-1}
\end{align}
with initial conditions $q_{0, 1}=1/2, q_{1, 1}=1/6$.
\end{theorem}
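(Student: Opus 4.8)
The plan is to derive the recurrence as a mean-field (master) equation for the empirical distribution of a refined node statistic, and then to promote the heuristic balance to an almost-sure limit by stochastic approximation, exactly as in the proofs of Theorems~\ref{thm:leaves_alpha=0} and \ref{thm:leaves_proportion}. First I would unpack the $\alpha=-\infty$ dynamics: the incoming node always targets a minimum-degree node, i.e.\ a uniformly random leaf $v$, and since a leaf has degree $1$ its redirection probability is $r_v=1/2$; hence with probability $1/2$ the new node attaches to $v$ (\emph{Case A}) and with probability $1/2$ it is redirected to $v$'s unique neighbor $u$ (\emph{Case B}). I would then track, for each pair $(k,\ell)$, the count $Q_{k,\ell}(N)$ of nodes having exactly $k$ leaf-neighbors and $\ell$ non-leaf-neighbors, noting that for $N\ge 3$ the leaves are precisely the type-$(0,1)$ nodes, so the total number of leaves is $L=Q_{0,1}$, and that the degree of a node is $k+\ell$ (which yields $p_x=\sum_{k+\ell=x}q_{k,\ell}$ at the end).

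The key bookkeeping step is to read off the transitions. Since $v$ is a uniform leaf, the probability that its neighbor $u$ is of type $(k,\ell)$ equals $kQ_{k,\ell}/L$, because each type-$(k,\ell)$ node is adjacent to exactly $k$ of the $L$ leaves. In \emph{Case B} only two nodes change: the new node is born of type $(0,1)$, and $u$ gains a leaf-neighbor, $(k,\ell)\to(k+1,\ell)$. \emph{Case A} is the delicate one, because it changes three nodes at once: the new node is again born of type $(0,1)$; the selected leaf $v$ becomes a degree-$2$ node with one leaf- and one non-leaf-neighbor, $(0,1)\to(1,1)$; and, crucially, because $v$ is no longer a leaf, its neighbor $u$ must be updated $(k,\ell)\to(k-1,\ell+1)$. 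For $N\ge 3$ the node $u$ is always a non-leaf and stays one, so no further types are affected.

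Assembling the expected one-step increments $\mathbb{E}[\Delta Q_{k,\ell}\mid G_N]$ then splits into a generic interior part and two boundary source terms. For $(k,\ell)\notin\{(0,1),(1,1)\}$ only the Case~A/Case~B updates of $u$ contribute, giving
\begin{align*}
\mathbb{E}[\Delta Q_{k,\ell}\mid G_N]=\frac{1}{2L}\big[(k+1)Q_{k+1,\ell-1}+(k-1)Q_{k-1,\ell}-2kQ_{k,\ell}\big].
\end{align*}
The state $(0,1)$ additionally receives $+1$ per step from the newborn node and $-1/2$ from the $v$-transition of Case~A, while $(1,1)$ receives $+1/2$ from that same transition. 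Substituting $Q_{k,\ell}=q_{k,\ell}N$, $L\sim N/2$ (from Theorem~\ref{thm:leaves_alpha=-infty}, since for $r=1/2$ the leaf fraction tends to $1/2$), and the stationarity condition $\mathbb{E}[\Delta Q_{k,\ell}]=q_{k,\ell}$, the interior equation collapses to $(2k+1)q_{k,\ell}=(k-1)q_{k-1,\ell}+(k+1)q_{k+1,\ell-1}$, while the two boundary equations give $q_{0,1}=1/2$ and $3q_{1,1}=1/2$, i.e.\ the stated initial conditions. Using that $q_{k,\ell}=0$ for $\ell\le 0$ (a star-center with no non-leaf neighbor can only be the whole tree, hence has vanishing frequency), the recurrence together with the two boundary values determines all $q_{k,\ell}$ uniquely, row by row in $\ell$.

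The two genuinely delicate points are the following. The \emph{main obstacle} is the rigorous justification that $Q_{k,\ell}(N)/N\to q_{k,\ell}$ almost surely: the state space is infinite-dimensional, so the stochastic-approximation argument of \cite{BorkarMeyn} must be applied with care, e.g.\ by truncating at large degree and controlling the tail, or by verifying concentration type-by-type via martingale (Azuma) estimates and then checking that the limiting ODE has the claimed unique fixed point. The second point, which is where sign errors are most likely, is the boundary bookkeeping: one must confirm that the only anomalous source terms are the $+1$ and $-1/2$ at $(0,1)$ and the $+1/2$ at $(1,1)$, and that the vanishing $\ell=0$ contributions never feed back into the recurrence. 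Once these are settled, $p_x=\sum_{k+\ell=x}q_{k,\ell}$ follows immediately from the definition of the type $(k,\ell)$, completing the proof.
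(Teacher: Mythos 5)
Your proposal is correct and follows essentially the same route as the paper's proof: track the counts $Q_{k,\ell}$ of $(k,\ell)$-type nodes, identify the two transitions of the pivot leaf's neighbor ($(k,\ell)\to(k+1,\ell)$ on redirection, $(k,\ell)\to(k-1,\ell+1)$ otherwise) together with the newborn-node and pivot-leaf source terms at $(0,1)$ and $(1,1)$, normalize using $L\approx N/2$, and impose stationarity to obtain \eqref{eq_master} with $q_{0,1}=1/2$, $q_{1,1}=1/6$, deferring full rigor to stochastic approximation exactly as the paper does. Your bookkeeping, including the net $+1/2$ source at $(0,1)$ and the vanishing $\ell=0$ boundary, matches the paper's five-term balance equation term for term.
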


\begin{proof}
When a new node arrives, it selects a leaf uniformly; call it $v$. With probability $1/2$ the new node attaches directly to $v$, and with probability $1/2$ it attaches to the unique neighbor $u$ of $v$.

Let $Q_N(k,\ell)$ be the expected number of nodes in state $(k,\ell)$ when the network has $N$ nodes, and set $q_N(k,\ell) = Q_N(k,\ell)/N$.  

Fix $k \ge 0$, $\ell \ge 0$. The expected change in the number of nodes in state $(k,\ell)$ in one step has five contributions:
\begin{align*}
&Q_{N+1}(k,\ell) \\
&= Q_N(k,\ell) \\
&\quad + \underbrace{Q_N(k-1,\ell)\cdot \frac{k-1}{Q_N(0,1)}\cdot \frac{1}{2}}_{\text{redirect via a leaf neighbor } (k-1,\ell)\to (k,\ell)} \\
&\quad + \underbrace{Q_N(k+1,\ell-1)\cdot \frac{k+1}{Q_N(0,1)}\cdot \frac{1}{2}}_{\text{no redirect at a leaf neighbor } (k+1,\ell-1)\to (k,\ell)} \\
&\quad - \underbrace{Q_N(k,\ell)\cdot \frac{k}{Q_N(0,1)}}_{\text{any leaf neighbor is pivot } \Rightarrow \text{exit from }(k,\ell)} \\
&\quad + \underbrace{\frac{1}{2}\,\delta_{k,0}\delta_{\ell,1}
+ \frac{1}{2}\,\delta_{k,1}\delta_{\ell,1}}_{\text{new node's state + pivot leaf's change if no redirection}}.
\end{align*}
If one of a node's $k$ leaf neighbors is chosen as the pivot leaf (which occurs with probability $k/Q_N(0,1)$), then with redirection the new node attaches to the focal node and $k$ increases by $1$; without redirection, that leaf neighbor ceases to be a leaf, so $k$ decreases by $1$ and $\ell$ increases by $1$. The two delta source terms encode the new node's initial state $(k,\ell)$ under the two respective cases.

Dividing by $N+1$ gives
\begin{align*}
q_{N+1}(k,\ell) &= \frac{N}{N+1}q_n(k,\ell) \\ + \frac{1}{N+1}\Big(&(k-1)q_N(k-1,\ell)+(k+1)q_N(k+1,\ell-1) \\ &-2k\,q_N(k,\ell) +\tfrac{1}{2}\delta_{k,0}\delta_{\ell,1}+\tfrac{1}{2}\delta_{k,1}\delta_{\ell,1}\Big).
\end{align*}
Letting $N \to \infty$ (so that $q_{N+1} \to q$ and $q_N \to q$) yields 
\begin{align*}
(1+2k)\,q_{k,\ell}
&= (k-1)\,q_{k-1,\ell} + (k+1)\,q_{k+1,\ell-1} \notag \\
&\quad + \tfrac{1}{2}\delta_{k,0}\delta_{\ell,1}
+ \tfrac{1}{2}\delta_{k,1}\delta_{\ell,1}, \qquad \ell \ge 1
\end{align*}
with boundary condition $q_{k,0} = 0$ for all $k$. This is the same as~\eqref{eq_master}.
The coincidence of $q_{k, l}$ determined from the expected growths and the actual values can be made rigorous via stochastic approximation \cite{BorkarMeyn}.
\end{proof}

\end{document}